\documentclass[12pt]{article}
\usepackage[utf8]{inputenc}
\usepackage[margin=1in]{geometry}
\usepackage{natbib}
\usepackage{robinstyle}
\usepackage{hyperref}
\usepackage[T1]{fontenc}
\usepackage[ruled, vlined]{algorithm2e}
\usepackage{multirow, subcaption, chemformula, authblk}
\usepackage{dsfont}

\bibliographystyle{unsrtnat}
\makeatletter
\SetKwInOut{Input}{Input}
\SetKwInOut{Output}{Output}
\SetKwInOut{Initialize}{Initialize}






\newcommand{\robin}[1]{#1}
\newcommand{\lammax}{\lambda_{\mathrm{max}}}
\newcommand{\lammin}{\lambda_{\mathrm{min}}}
\DeclareMathOperator{\diag}{diag}
\newcommand{\mU}{\mat U}
\newcommand{\mW}{\mat W}
\newcommand{\mUW}{[\mU, \mW]}
\newcommand{\SUW}{\mat\Sigma_{\mU\mW}}
\newcommand{\hSUW}{\hvec\Sigma_{\mU\mW}}
\newcommand{\snr}{\mathrm{SNR}}

\title{Convex Estimation of Gaussian Graphical Regression Models with Covariates}
\author{Ruobin Liu}
\author{Guo Yu}
\affil{Department of Statistics and Applied Probability, University of California, Santa Barbara, Santa Barbara, CA, USA}
\date{}
\begin{document}
\maketitle

\begin{abstract}%
Gaussian graphical models (GGMs) are widely used to recover the conditional independence structure among random variables. Recent work has sought to incorporate auxiliary covariates to improve estimation, particularly in applications such as co-expression quantitative trait locus (eQTL) studies, where both gene expression levels and their conditional dependence structure may be influenced by genetic variants.
Existing approaches to covariate-adjusted GGMs either restrict covariate effects to the mean structure or lead to nonconvex formulations when jointly estimating the mean and precision matrix. In this paper, we propose a convex framework that simultaneously estimates the covariate-adjusted mean and precision matrix via a natural parametrization of the multivariate Gaussian likelihood.
The resulting formulation enables joint convex optimization and yields improved theoretical guarantees under high-dimensional scaling, where the sparsity and dimension of covariates grow with the sample size. We support our theoretical findings with numerical simulations and demonstrate the practical utility of the proposed method through a reanalysis of an eQTL study of glioblastoma multiforme \robin{and an analysis of diet on the human gut microbiome}.
\end{abstract}

\section{Introduction}
Graphical models use graphs to represent conditional independence relationships among components of a random vector $\bm X=(X_1,\dots,X_p)$.
In the Gaussian case, this leads to Gaussian graphical models (GGMs), where conditional independences correspond to zeros in the precision matrix $\mat\Omega=\mat\Sigma^{-1}$.
Specifically, for a Gaussian random vector $\bm X$ and $i\neq j$, the variables $X_i$ and $X_j$ are conditionally independent given $\bm X_{\{1,\dots,p\}\setminus\{i,j\}}$ if and only if $\Omega_{ij}=0$ \citep{lauritzen}.
There has been extensive work on the estimation of GGMs in high-dimensional settings \citep{yuanlin, friedman-glasso, meinshausen}, where sparsity is typically induced on the precision matrix to improve interpretability and estimation accuracy.

A major application of GGMs arises in genomics, where the goal is to infer gene interaction networks from gene expression data.
Such problems are inherently high-dimensional, with the number of genes often exceeding the number of samples \citep{schafer2005}.
A prominent example is provided by co-expression quantitative trait locus (eQTL) studies, in which gene expression measurements are analyzed jointly with external genetic markers.
These markers may confound gene expression levels, either through local genetic effects or through associations with unmeasured factors influencing gene expression.
This setting motivates the development of Gaussian graphical models that incorporate covariate information, enabling adjustment for such confounding in network estimation.

Specifically, let $\bm X \in \R^p$ be the vector of responses and $\bm U\in\R^q$ the associated vector of covariates.
We consider the general model
\begin{equation}
\label{eq:multitask}
\bm X\given \bm U = \vec u \sim N(\vec\mu(\vec u), \mat\Omega^{-1}(\vec u)),
\end{equation}
which allows both the mean vector and the precision matrix of $\bm X$ to depend on a realization of covariates $\bm U = \vec u$.
A commonly used specification assumes that only the mean depends on $\bm U$, while the precision matrix remains constant. This specification is also known as the conditional Gaussian graphical model \citep{yin-li}:
\begin{equation}
\label{eq:meanadj}
\vec\mu(\vec u) = \mat\Gamma \vec u,\quad
\mat\Omega(\vec u) = \mat B_0.
\end{equation}
Estimation of the unknown parameters in \eqref{eq:meanadj} is challenging, as the negative log-likelihood associated with \eqref{eq:multitask} is not jointly convex in $(\mat\Gamma, \mat B_0)$.
Existing approaches typically rely on alternating optimization \citep{rothman,yin-li,chen18} or two-stage procedures \citep{cai13, chen2016}.
Despite this nonconvexity, minimax-optimal rates for joint estimation under \eqref{eq:multitask} with \eqref{eq:meanadj} have been established \citep{chen18, lv}.
Several works have sought to convexify the problem. For instance, \citet{wang_joint_2015} propose a convex formulation based on an initial estimate of $\mat\Gamma$, while \citet{zhu_convex_2020} introduce a reparametrization that yields a jointly convex negative log-likelihood.

Model \eqref{eq:multitask} with specification \eqref{eq:meanadj} can also be viewed as a multivariate response regression model, where the primary goal is to improve estimation of $\mat\Gamma$ by leveraging structural assumptions on $\mat B_0$.
In contrast, the covariate-adjusted GGM framework focuses on recovering the sparsity pattern of the precision matrix while accounting for covariate effects through $\bm U$.
Several methods in this setting incorporate structure-inducing penalties into the likelihood of \eqref{eq:multitask} associated with the specification \eqref{eq:meanadj}; see \citet{liu_estimation_2025} for a recent review highlighting connections and distinctions between these closely related lines of work.

While substantial progress has been made in modeling heterogeneous means, relatively little work allows the graph structure itself to vary with covariates; most existing approaches instead assume a common precision matrix $\mat B_0$ across subjects.
However, subject-specific graph structures are plausible in applications such as eQTL studies.
For example, gene C may mediate the co-expression of genes A and B only in the presence of a single-nucleotide polymorphism (SNP) local to gene C.
In such cases, genes A and B may be conditionally independent given the rest of the network unless a genetic variant is present near gene C \citep{fehrmann,kolberg,rockman}.

A small number of works have considered models in which both the mean and precision depend on covariates.
Bayesian approaches, for example, allow subject-specific precision matrices by imposing regression or clustering structures on $\mat\Omega$ \citep{ni_bayesian_2022, niu_covariate-assisted_2024, wang_bayesian_2022, zeng_bayesian_2025}.
However, these methods do not directly address joint estimation of both the mean and precision as functions of covariates within a unified framework.

This work is closely related to \citet{zhangli}, which extends \eqref{eq:meanadj} to allow both the mean and the network structure of the responses to depend on covariates.
Specifically, their model specifies \eqref{eq:multitask} by taking
\begin{equation}
\label{eq:zhanglispec}
\vec\mu(\vec u) = \mat\Gamma \vec u,\quad
\mat\Omega(\vec u) = \mat B_0 + \sum_{h=1}^q \mat B_h u_h
\end{equation}
where $\mat B_h \in \R^{p \times p}, h = 0, 1, \dotsc, q$ are sparse, symmetric matrices.
As in conditional GGMs, joint estimation of the parameters $(\mat\Gamma, \mat B_0, \mat B_1, \dotsc, \mat B_q)$ under \eqref{eq:zhanglispec} leads to a nonconvex objective. To address this, \citet{zhangli} adopt a two-stage estimation procedure.
Building on this framework, \citet{zhang-multi-task-2025} propose a multi-task estimator with a cross-task group penalty, while \citet{meng_statistical_2025} develop debiased nodewise estimators for inference on edge--covariate effects.
\robin{
\citet{wang_high-dimensional_2025} introduce a parametrization for covariate-adjusted graphs where the estimated precision matrices are guaranteed to be positive definite. However, their work is developed for small, fixed numbers of covariates $q$ and their theoretical guarantees do not immediately allow for $q$ to scale with $n$.
}

In this work, we develop \emph{Covariate-adjusted Sparse Precision with Natural Estimation} (\texttt{cspine}), a method for estimating covariate-adjusted GGMs with heterogeneous network effects based on a jointly convex formulation of \eqref{eq:multitask}.
Like \citet{zhangli}, our method uses nodewise regression \citep{meinshausen} to estimate the covariate-adjusted graph structure.
However, by adopting a natural parametrization of the Gaussian likelihood, we obtain a formulation in which each nodewise regression reduces to a convex optimization problem, thereby avoiding the nonconvexity inherent in the existing approaches.

Our main contributions and the organization of the paper are as follows.
In Section~\ref{sec:setup}, we review the framework of \citet{zhangli} and introduce the proposed natural parametrization, under which the likelihood in \eqref{eq:multitask} is jointly convex in all unknown parameters.
\robin{Moreover, the natural parametrization also carries an interpretable mean structure whereby the average level of the responses is mediated by the covariance matrix. This property is detailed in Section~\ref{sec:nat_effect}}.
Building on this formulation, Section~\ref{sec:penalized} presents the proposed model and estimation procedure with sparsity-inducing penalties.
Section~\ref{sec:theory} establishes theoretical guarantees for the proposed estimator, showing that the convex formulation leads to improved scaling of the ambient dimensions $p$ and $q$ relative to the sample size $n$ under assumptions comparable to those in \citet{zhangli}.
These theoretical advantages are further supported by extensive simulations in Section~\ref{sec:simulation}. Finally, in Section~\ref{sec:realdata}, we illustrate the practical utility of the proposed approach through an analysis of glioblastoma multiforme (GBM) gene expression data \robin{and human gut microbiome data}.
Proofs of all theoretical results are provided in Appendix~\ref{appdx:proofs}.

Throughout this paper, we adopt the following notational conventions unless otherwise specified. Scalars are denoted by regular (non-bold) letters, vectors by bold lowercase letters, and matrices by bold uppercase letters.
For any positive integer $m$, let $[m]=\{1, \ldots, m\}$, and for $j \in [m]$, let $-j$ denote the index set $[m]\setminus\{j\}$. For a matrix $\mat M$ and index sets $\cI$ and $\cJ$, we write $\mat M_{\cI,}$, $\mat M_{\cJ}$, and $\mat M_{\cI,\cJ}$ for the submatrices of $\mat M$ consisting of rows indexed by $\cI$, columns indexed by $\cJ$, and both rows indexed by $\cI$ and columns indexed by $\cJ$, respectively.

\section{Model Formulation and Natural Parametrization}
\label{sec:setup}
In this section, we first review the model formulation of \citet{zhangli} for estimating covariate-adjusted GGMs in \eqref{eq:multitask} via a nodewise regression approach, and discuss the challenges of their two-stage estimation procedure arising from the nonconvexity of the formulation.
We then introduce a convex formulation of \eqref{eq:multitask} based on a natural parametrization.

\subsection{Nodewise Regression in Covariate-Adjusted GGMs}
Suppressing the dependence of $\mat\Sigma$ and $\vec\mu$ on the covariates $\bm U$ for notational simplicity, the Gaussian model in \eqref{eq:multitask} implies that, for each $j \in [p]$,
\begin{equation}
\label{eq:conditional}
X_j - \mu_j \given \bm X_{-j}, \bm U
\sim N\parens*{\mat\Sigma_{j, -j}\mat\Sigma^{-1}_{-j,-j}(\bm X_{-j} - \vec\mu_{-j}),\, \Sigma_{jj} - \mat\Sigma_{j, -j}\mat\Sigma^{-1}_{-j,-j}\mat\Sigma_{-j, j}}.
\end{equation}
By the block matrix inversion formula, \eqref{eq:conditional} is equivalent to the linear model with Gaussian errors
\begin{equation}
\label{eq:nodewise}
X_j = \mu_j - \Omega^{-1}_{jj}\mat\Omega_{j,-j}(\bm X_{-j} - \vec\mu_{-j}) + \ep_j,
\quad \ep_j \sim N(0, \sigma^2_{\ep_j}), \quad \sigma^2_{\ep_j} = 1/\Omega_{jj},
\end{equation}
where the expression is understood conditional on $(\bm X_{-j}, \bm U)$.
Based on this representation, nodewise regression methods for estimating GGMs \citep{meinshausen} can be naturally extended to incorporate covariate adjustments, where the regression coefficients and error variance are used to recover the $j$-th column of the precision matrix $\mat\Omega$.

\subsection{Non-Convexity of the Nodewise Regression Under \eqref{eq:zhanglispec}}
\label{sec:zhangli}
Write $\mat\Gamma = (\vec\gamma_1,\dotsc, \vec\gamma_p)^\T \in \R^{p \times q}$. Under the specification \eqref{eq:zhanglispec} of \citet{zhangli}, the nodewise regression model \eqref{eq:nodewise} becomes
\begin{equation}
\label{eq:zhanglinodewise}
X_j = \vec u^\T \vec\gamma_j + \sum_{k\neq j}^p \eta_{jk0}(X_k - \vec u^\T \vec\gamma_k) + \sum_{k\neq j}^p \sum_{h=1}^q \eta_{jkh} u_h(X_k - \vec u^\T \vec\gamma_k) + \ep_j,
\end{equation}
where $\eta_{jkh} = -[\mat B_h]_{jk}/\Omega_{jj}$ for $k \in [p] \backslash \{j\}$ and $h \in \{0 \} \cup [q]$; this model is termed \emph{Gaussian graphical regression} in \citet{zhangli}.
Indeed, the least squares criterion based on \eqref{eq:zhanglinodewise} is not jointly convex in the unknown parameters, due to the product terms $\eta_{jkh}\times\vec u^\T \vec\gamma_k$.

To address this difficulty, \citet{zhangli} separate the estimation of $\vec\gamma_j$ from that of $\eta_{jkh}$.
Specifically, if $\mat\Gamma$ were known, the responses may first be centered via $Z_j = X_j - \vec u^\T \vec \gamma_j$, reducing the model to the oracle form
\begin{equation}
\label{eq:zhanglinodewise_oracle}
Z_j = \sum_{k\neq j}^p \eta_{jk0} Z_k + \sum_{k\neq j}^p \sum_{h=1}^q \eta_{jkh} u_h Z_k + \ep_j,
\end{equation}
for which the least squares criterion is convex in the $\eta$'s.

In practice, however, $\mat\Gamma$ is not known, and a two-stage procedure is employed.
In the first stage, each $\vec\gamma_j$ is estimated under a simplified working model
\[
X_j = \vec u^\T\vec\gamma_j + \ep_j.
\]
Comparing with the full model \eqref{eq:zhanglinodewise}, this working model imposes a strong assumption on the specification \eqref{eq:zhanglispec}, effectively treating the precision matrix $\mat B_0$ as diagonal and all other matrices $\mat B_h$ as zero matrices, thereby assuming conditional independence among all pairs of responses, which is furthermore not dependent on covariates.
As a result, the first-stage estimator does not account for the network structure and its potential dependence on covariates, and thus may incur non-negligible bias when such dependencies are present.

Given the resulting estimate of $\mat\Gamma$, the second stage proceeds by forming centered responses $Z_j$ and estimating the coefficients $\eta_{jkh}$ using the oracle model \eqref{eq:zhanglinodewise_oracle}.
This stage treats the estimate of $\mat\Gamma$ as fixed, thereby inheriting any estimation error from the first stage.

The two-stage procedure introduces additional estimation error, as inaccuracies in estimating $\vec\gamma_j$ propagate to the estimation of $\eta_{jkh}$.
As shown in the theoretical analysis of \citet{zhangli}, the concentration guarantees are weakened relative to the oracle setting in which $\mat\Gamma$ is known, since the overall error probability depends on the tuning parameters from the first stage. We revisit these limitations in Section~\ref{sec:theoretical_discussion}.

\subsection{A Convex Formulation via Natural Parametrization}
These limitations motivate the development of a formulation that enables joint convex estimation without relying on stagewise procedures. Specifically, we reparametrize the Gaussian likelihood using its natural parameters, which decouple the interaction between regression and precision parameters in \eqref{eq:zhanglinodewise}.
Recall that the density of a $p$-dimensional Gaussian distribution can be expressed in terms of its natural parameters $(\tvec\theta, \tvec\Theta)$, where $\tvec\theta = \mat\Omega\vec\mu$ and $\tvec\Theta = -\mat\Omega$, as
\begin{equation}
\label{eq:natural_gauss}
L(\tvec\theta, \tvec\Theta \given \vec x) =
\exp\sets*{\tvec\theta^\T \vec x - \frac{1}{2}\vec x^\T\tvec\Theta \vec x - A(\tvec\theta, \tvec\Theta)}.
\end{equation}
Our formulation is motivated by the fact that the cumulant function $A$ is jointly convex in the natural parameter $(\tvec\theta, \tvec\Theta)$ and therefore so is $-\log L(\tvec\theta, \tvec\Theta \given \vec x)$.
Specifically, define
\begin{equation}
\label{eq:reparam}
\vec\theta = \diag(\mat\Omega)^{-1}\mat\Omega\vec\mu
\quad\text{and}\quad
\mat\Theta = -\diag(\mat\Omega)^{-1}\mat\Omega,
\end{equation}
where $\diag(\mat\Omega)$ is the $p\times p$ diagonal matrix formed from $\mat\Omega$.
Under this parametrization, the nodewise regression model in \eqref{eq:nodewise} can be written as
\[
X_j = \mu_j - \Omega^{-1}_{jj}\mat\Omega_{j,-j}(\bm X_{-j} - \vec \mu_{-j}) + \ep_j
= \theta_j + \mat\Theta_{j,-j}\bm X_{-j} + \ep_j,
\]
which yields a least squares criterion that is jointly convex in $\theta_j$ and $\mat\Theta_{j,-j}$.

In contrast to \eqref{eq:zhanglispec}, we model covariate dependence through $(\vec\theta, \mat\Theta)$ rather than $(\vec \mu, \mat\Omega)$ by specifying
\begin{equation}
\label{eq:cspinespec}
\vec\theta(\vec u) = \mat\Gamma \vec u,\quad \mat\Theta(\vec u) = \mat B_0 + \sum_{h=1}^q \mat B_hu_h,
\end{equation}
where $\mat B_0 \in \R^{p \times p}$ represents the baseline graph structure and $\mat B_h \in \R^{p \times p}$ captures the effect of covariate $u_h$ on the graph structure.
Writing $\mat\Gamma = (\vec\gamma_1,\dotsc, \vec\gamma_p)^\T$ and letting $\beta_{jkh} = [\mat B_h]_{jk}$ for $k \in [p]\setminus\{j\}$ and $h \in \{0\} \cup [q]$,
this specification leads to the following nodewise regression model
\begin{align}
X_j = \vec u^\T\vec\gamma_j + \sum_{k\neq j}^p \beta_{jk0} X_k + \sum_{k\neq j}^p \sum_{h=1}^q \beta_{jkh} u_h X_k + \ep_j,
\quad \ep_j \sim N(0, \sigma^2_{\ep_j}), \quad \sigma^2_{\ep_j} = 1/\Omega_{jj}
\label{eq:nodewise1}
\end{align}
for $j\in[p]$.
That is, $X_j$ is regressed on the covariates $\vec u$, all other responses $X_k$ for $k \neq j$, and the pairwise interactions between the covariates and the other responses.
Remarkably, the least squares criterion associated with \eqref{eq:nodewise1} is jointly convex in $\vec\gamma_j$ and $\beta_{jkh}$, enabling simultaneous estimation of all parameters and avoiding the nonconvexity inherent in the formulation of \citet{zhangli}.

The parametrization \eqref{eq:reparam} implicitly imposes that $\Theta_{jj} = -1$ for all $j\in[p]$.
Under the specification \eqref{eq:cspinespec}, this is enforced by setting $[\mat B_0]_{jj} = -1$ and $[\mat B_h]_{jj} = 0$ for all $h\in[q]$ and $j\in[p]$.
Consequently, the residual variance $\sigma^2_{\ep_j} = 1/\Omega_{jj}$, that is, the conditional variance, does not depend on covariates, an assumption also adopted by \citet{zhangli}.
Moreover, $\mat\Theta$ and $\mat\Omega$ share the same sparsity structure, so that estimating the sparsity pattern of $\mat\Theta$ is equivalent to that of $\mat\Omega$.
Accordingly, we focus on recovering the sparsity pattern of $\mat\Theta$ in what follows.

Finally, we note that the proposed parametrization is closely related to, yet distinct from, existing convex formulations.
The transformation in \eqref{eq:reparam} is conceptually related to the natural parametrization in \citet{yu2019} for univariate-response regression and its multivariate extension in \citet{zhu_convex_2020}.
However, the scaling by $\diag(\mat\Omega)^{-1}$ in \eqref{eq:reparam} is crucial for ensuring that the nodewise regression model in \eqref{eq:nodewise1} remains jointly convex in the presence of covariate adjustments.
This feature distinguishes our approach from the reparametrizations in \citet{yu2019} and \citet{zhu_convex_2020}, which do not accommodate covariate-dependent variance or covariance structures.

\subsection{Effect of Covariates and Network Structure on the Mean}
\label{sec:nat_effect}
Let $\tvec\theta(\vec u) = \mat\Omega(\vec u)\vec\mu(\vec u)$ denote the natural parameter as in \eqref{eq:natural_gauss}.
Combining with \eqref{eq:reparam} and \eqref{eq:cspinespec} gives the mean as $\vec\mu(\vec u) = \mat\Sigma(\vec u) \tvec\theta(\vec u)$, where $\mat\Sigma(\vec u) = \mat\Omega^{-1}(\vec u)$ is the covariance of $\bm X$ given $\bm U = \vec u$. The matrix $\mat\Sigma(\vec u)$ acts as a propagation operator for the natural parameter $\tvec\theta(\vec u)$ so that the product aggregates the contributions of $\vec u$ across co-varying responses to produce $\vec\mu(\vec u)$.
Explicitly, the conditional mean of $X_j$ is decomposed as
\begin{equation}
\label{eq:cond_mean_decomp}
\mu_j(\vec u) = \Sigma_{jj}(\vec u)\,\Omega_{jj}\vec\gamma_j^\T \vec u + \sum_{k \neq j} \Sigma_{jk}(\vec u)\,\Omega_{kk}\vec\gamma_k^\T \vec u.
\end{equation}
The first term in \eqref{eq:cond_mean_decomp} captures the contribution of $\vec u$ to $\mu_j(\vec u)$ through node $j$ itself.
The product $\Sigma_{jj}(\vec u)\,\Omega_{jj}$ is the variance inflation factor $1/(1-R^2_j(\vec u))$, where $R^2_j(\vec u)$ is the $R^2$ of regressing $X_j$ on $\bm X_{-j}$ conditional on $\bm U = \vec u$.
The intercept $\vec\gamma_j^\T \vec u$ in \eqref{eq:nodewise1} is therefore amplified by a factor that grows with how predictable $X_j$ is from its network neighbors.
The second term in \eqref{eq:cond_mean_decomp} aggregates contributions from all other nodes $k \neq j$.

The decomposition \eqref{eq:cond_mean_decomp} allows covariates to have cascading effects on the mean through the estimated network, a plausible phenomenon in many domains.
For instance, recent advances in the understanding of gene regulatory networks point toward network-mediated effects of eQTLs on mean expression levels.
Empirical evidence demonstrates that many eQTLs may alter the expression of distant genes by percolating local perturbations through shared pathways and co-expression networks \citep{boyle_expanded_2017,wolfgang_2018,vosa_large-scale_2021,zhou_joint_2021,ruzickova}.
In the context of the human gut microbiome, the decomposition in \eqref{eq:cond_mean_decomp} allows host factors such as diet to influence the abundance of a given taxonomic unit both directly and indirectly, as perturbations to keystone species propagate through cross-feeding and competitive exclusion relationships to reshape the overall microbial community \citep{culp_cross-feeding_2023}.

\section{Estimation}
\label{sec:penalized}
For each $j \in [p]$, define the concatenated parameter vector
$\vec\beta_j = (\vec\beta^\T_{j,0},\vec\beta^\T_{j,1}, \dotsc, \vec\beta_{j,q}^\T)^\T \in \R^{(p-1)(q+1)}$,
where the $h$-th block collects the parameters
\[\vec\beta_{j,h} = (\beta_{j1h}, \dotsc, \beta_{j,j-1,h}, \beta_{j,j+1, h}, \dotsc, \beta_{jph})^\T \in \R^{p-1}.\]
Under this notation, the nodewise regression model \eqref{eq:nodewise1} can be written compactly as
\begin{equation}
\label{eq:nodewise2}
X_j = \vec u^\T\vec\gamma_j + \bm X_{-j}^\T \vec\beta_{j,0} + \sum_{h=1}^q (u_h \bm X_{-j})^\T\vec\beta_{j,h} + \ep_j,
\quad \ep_j \sim N(0, \sigma^2_{\ep_j}).
\end{equation}
Figure~\ref{fig:thetadecomp} illustrates how each block $\vec\beta_{j,h}^\T$ corresponds to the $j$-th row (and, by symmetry, the $j$-th column) of $\mat\Theta$ in \eqref{eq:cspinespec}.

Suppose we collect $n$ independent observations $\sets{(\vec x^{(i)}, \vec u^{(i)})}_{i=1}^n$ of responses and covariates from the joint distribution in \eqref{eq:multitask}.
Let $\mat U = \bracks*{\vec u^{(1)}, \dotsc, \vec u^{(n)}}^\T \in \R^{n\times q}$
and $\mat X = \bracks*{\vec x^{(1)}, \dotsc, \vec x^{(n)}}^\T \in \R^{n \times p}$ denote the covariate and response matrices, respectively. Let $\vec u_h \in \R^n$ and $\vec x_k \in \R^n$ denote the $h$-th and $k$-th columns of $\mat U$ and $\mat X$.
For $j\in[p]$, define the design matrix
$
\mat W_{-j} = \bracks*{\mat W_{-j,0},\ \mat W_{-j, 1},\ \dotsc,\ \mat W_{-j,q}} \in \R^{n\times (p-1)(q+1)},
$
where
\begin{align*}
\begin{split}
\mat W_{-j,0} &=\bracks*{\vec x_1, \dotsc, \vec x_{j-1}, \vec x_{j+1}, \dotsc, \vec x_p} \in \R^{n\times(p-1)},\\
\mat W_{-j, h} &=
\bracks*{\vec x_1 \odot \vec u_h,\ \dotsc,\ \vec x_{j-1}\odot \vec u_h,\ \vec x_{j+1} \odot \vec u_h,\ \dotsc,\ \vec x_{p}\odot \vec u_h}  \in \R^{n\times(p-1)}\; \text{for $h \in [q]$},
\end{split}
\end{align*}
and $\odot$ denotes the elementwise product.

We consider estimating $\vec\gamma_j$ and $\vec\beta_j$ in \eqref{eq:nodewise2} by solving the following penalized least squares problem
\begin{equation}
\label{eq:cspineobjgeneral}
\minimize_{\vec \gamma_j,\, \vec\beta_j}\,
\frac{1}{2n}\norm*{\vec x_j - \mat U\vec \gamma_j - \mat W_{-j}\vec\beta_j}_2^2 + g_j(\vec\gamma_j, \vec\beta_j),
\end{equation}
where $g_j$ is a convex penalty function.
In this paper, we focus on penalties that induce sparsity in $\vec\beta_j$, promoting interpretable graph estimates.
By construction, sparsity in $\vec\beta_j$ corresponds to sparsity in the $j$-th row (and, by symmetry, the $j$-th column) of $\mat\Theta$, and hence in $\mat\Omega$ via \eqref{eq:reparam}.

\begin{figure}[tbp]
\centering
\includegraphics[width=0.9\linewidth]{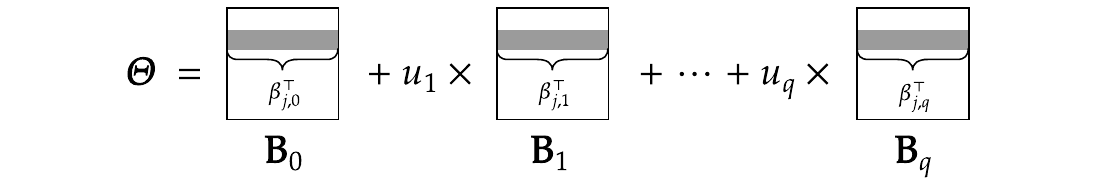}

\caption{Decomposition of $\mat\Theta$ into components $\mat B_h$ according to \eqref{eq:cspinespec}. The block $\vec\beta_{j, h}$ corresponds to the effects of covariate $u_h$ on the partial correlations of response $X_j$ while $\vec\beta_{j,0}$ describes the population effect.}
\label{fig:thetadecomp}
\end{figure}

\subsection{Sparse-Group Penalized Estimation}
Motivated by applications to eQTL studies, we consider a structured sparsity-inducing penalty for $g_j$ in \eqref{eq:cspineobj}.
Our goal is to capture both elementwise and groupwise sparsity in the coefficients $\vec\beta_j$.
Elementwise sparsity within each group $\vec\beta_{j,h}$ allows for the possibility that a covariate $u_h$ affects the conditional dependence between $X_j$ and only a subset of the remaining $\{X_k : k \neq j\}$.
In contrast, groupwise sparsity, where $\vec\beta_{j,h} = \vec 0$, indicates that the covariate $u_h$ has no effect on any conditional dependence structure involving $X_j$.

To this end, we consider the convex optimization problem
\begin{equation}
\label{eq:cspineobj}
\minimize_{\vec \gamma_j,\, \vec\beta_j}\,
\frac{1}{2n}\norm*{\vec x_j - \mat U\vec \gamma_j - \mat W_{-j}\vec\beta_j}_2^2 +  \lambda_j\norm{\vec\gamma_j}_1 + \lambda_j\norm{\vec\beta_j}_1 + \lambda_{g,j} \norm{\vec\beta_{j,-0}}_{1,2}
\end{equation}
where $\lambda_j \geq 0$ and $\lambda_{g,j} \geq 0$ are tuning parameters, and
$\norm{\vec\beta_{j,-0}}_{1,2} = \sum_{h=1}^q \norm{\vec\beta_{j,h}}_2$
denotes the group lasso penalty \citep{yuan2006model}.
\robin{To ease notation and with the understanding that the nodewise regressions \eqref{eq:cspineobj} are independent, we will suppress the notational dependence of $\lambda$ and $\lambda_g$ on the node $j$.}
Together, the $\ell_1$ and group penalties on $\vec\beta_j$ form a sparse-group lasso penalty \citep{simon}.
In contrast, only an $\ell_1$ penalty is applied to $\vec\beta_{j,0}$ and $\vec\gamma_j$, allowing for unrestricted elementwise sparsity in the baseline precision matrix $\mat B_0$ and in the covariate effects on the mean vector.

To estimate the diagonal entries of $\mat\Omega$, an estimate of the noise variance $\sigma^2_{\ep_j}$ is required. While several estimators are available in the literature on error variance estimation in high-dimensional linear regression \citep{reid,yu2019}, we adopt
\begin{align}
\hat\sigma^2_{\ep_j} =
\frac{\norm{\vec x_j - \mat U\hat{\vec\gamma}_j - \mat W_{-j} \hat{\vec \beta}_j}_2^2}
{n - \widehat{\mathrm{nnz}}},
\label{eq:noisevar}
\end{align}
where \robin{$\widehat{\mathrm{nnz}}$ denotes the number of nonzero entries of $(\hvec\gamma_j^\top, \hvec\beta_j^\top)^\top$} as in \citet{zhang_efficient_2020}.

\subsection{Optimization}
\label{sec:opt}
\robin{
Following \citet{zhang-multi-task-2025}, we use the augmented Lagrangian method with semismooth Newton subproblem solvers of \citet{zhang_efficient_2020} to solve \eqref{eq:cspineobj}. This method achieves superlinear local convergence by exploiting the second-order sparsity of the sparse-group proximal operator and we have found it to outperform first order methods in our experimentation.
Let $\mat A_j = [\mat U,\,\mat W_{-j}]$, $\vec c_j = \vec x_j$, and $\vec z_j = (\vec\gamma_j^\top,\vec\beta_j^\top)^\top$, so \eqref{eq:cspineobj} can be written as
$\min_{\vec z_j}\norm{\mat A_j\vec z_j - \vec c_j}_2^2/2n + g(\vec z_j)$,
where 
$g(\cdot)$ is the sparse-group lasso penalty.
Its dual problem is
\[
\min_{\vec v,\,\vec s}\;\langle\vec c_j,\vec v\rangle + \frac{n}{2}\|\vec v\|_2^2 + g^*(\vec s),
\qquad\text{s.t.}\quad \mat A_j^\T\vec v + \vec s = \vec 0,
\]
where $g^*$ is the convex conjugate of $g$ and $\vec s$ and $\vec v$ are dual variables.
This dual problem is solved via the augmented Lagrangian method (ALM), with the subproblem in each ALM update solved using a semismooth Newton step, which is outlined inAlgorithm~\ref{alg:ssnal}.}

\robin{Specifically, consider the proximal operator $\operatorname{prox}_g(\vec w) = \argmin_{\vec z}\{g(\vec z)+\norm{\vec z-\vec w}_2^2/2\}$ and set $\vec w^k(\vec v) = \tau_k^{-1}\vec z_j^k - \mat A_j^\T\vec v$.
The ALM subproblem at iteration $k$ amounts to minimizing
\begin{equation}
\label{eq:al_dual}
\psi_k(\vec v) = \langle\vec c_j,\vec v\rangle + \frac{n}{2}\|\vec v\|_2^2
  + g^*\!\bigl(\operatorname{prox}_{g^*/\tau_k}(\vec w^k(\vec v))\bigr)
+ \frac{\tau_k}{2}\bigl\|\operatorname{prox}_g(\vec w^k(\vec v))\bigr\|_2^2
  - \frac{1}{2\tau_k}\|\vec z_j^k\|_2^2.
\end{equation}
Step~1 in Algorithm~\ref{alg:ssnal} solves $\min_{\vec v}\psi_k(\vec v)$ via semismooth Newton iterations using the generalized Jacobian of $\operatorname{prox}_g$.
The proximal map decomposes as $\operatorname{prox}_g = \operatorname{prox}_{\lambda_g\|\cdot\|_{1,2}} \circ \operatorname{prox}_{\lambda\|\cdot\|_1}$, and only groups $G_h$ satisfying $\|[\operatorname{prox}_{\lambda\|\cdot\|_1}(\vec w)]_{G_h}\|_2 > \lambda_g$ contribute to the Jacobian.
This second-order sparsity means each Newton solve operates on a dense system whose size is determined by the solution's active set. We refer the reader to \citet{zhang_efficient_2020} for details.
}
\begin{algorithm}
\SetAlgoLined
Let $\tau_0>0$ and $(\vec v^0,\vec s^0,\vec z_j^0)$ be given. Iterate for $k=0,1,\dotsc$ until convergence:

\textbf{Step 1:} Compute $\vec v^{k+1}=\min_{\vec v}\,\psi_k(\vec v)$ \eqref{eq:al_dual} and $\vec s^{k+1}=\operatorname{prox}_{g^*/\tau_k}\!\left(\tau_k^{-1}\vec z_j^k-\mat A_j^\T\vec v^{k+1}\right)$

\textbf{Step 2:} Compute $\vec z_j^{k+1}=\vec z_j^k-\tau_k\!\left(\mat A_j^\T\vec v^{k+1}+\vec s^{k+1}\right)$

\textbf{Step 3:} Update $\tau_{k+1}$ so that $\tau_{k+1}>\tau_k$
\caption{Augmented Lagrangian method for \eqref{eq:cspineobj} \citep{zhang_efficient_2020}}
\label{alg:ssnal}
\end{algorithm}

\subsection{Implementation Details}
\label{sec:tuning}
The penalized nodewise regression problem \eqref{eq:cspineobj} involves two tuning parameters: $\lambda$ controlling elementwise sparsity and $\lambda_g$ controlling groupwise sparsity. We select $(\lambda, \lambda_g)$ via $k$-fold cross-validation.
Equivalently, we reparametrize the penalty using a global sparsity penalty $\lambda_0 > 0$ and a mixing parameter $\alpha_s \in [0,1]$,
so that the penalty in \eqref{eq:cspineobj} becomes
\[
g_j(\vec\gamma_j, \vec\beta_j) = \alpha_s\lambda_0 (\norm{\vec\gamma_j}_1 + \norm{\vec\beta_{j}}_1) + (1-\alpha_s)\lambda_0\norm{\vec\beta_{j, -0}}_{1,2}.
\]
For fixed $\alpha_s$, we compute solutions along a regularization path in $\lambda_0$. \robin{Following standard practice, we standardize the columns of the design matrix within each nodewise regression to ensure even penalization of the parameters.}

As in standard nodewise regression approaches to GGMs, each row of the precision matrix is estimated separately, and the resulting estimate is not necessarily symmetric \citep{meinshausen}.
In our setting, each nodewise regression in \eqref{eq:cspineobj} estimates one row of $\mat\Theta$, and asymmetry arises due to both independent row estimation and the different row scaling factors in \eqref{eq:reparam}.
To enforce symmetry, we apply a post hoc symmetrization step after performing all nodewise regressions.
Specifically, with a solution $\hvec\beta_j$ to \eqref{eq:cspineobj} and the corresponding error variance estimate in \eqref{eq:noisevar}, set
$\tvec\beta_j = -\hvec\beta_j / \hat\sigma^2_{\ep_j}$
and construct symmetric matrices $\tvec B_h \in \R^{p \times p}$ for all $h = 0, 1,\dotsc, q$ with entries
\[
    [\tvec B_h]_{jk} = [\tvec B_h]_{kj} = \tilde{\beta}_{jkh}
    \mathds{1} \bracks*{
    \abs{\tilde\beta_{jkh}} < \abs{\tilde\beta_{kjh}} } +
    \tilde{\beta}_{kjh}
    \mathds{1} \bracks*{
    \abs{\tilde\beta_{jkh}} > \abs{\tilde\beta_{kjh}}},
\]
where the expression $\mathds{1} \bracks*{P}$ is the indicator function of the argument $P$: it equals $1$ if $P$ is true and $0$ otherwise.
This corresponds to the ``and-rule'' symmetrization, under which $[\tilde{\mat B}_h]_{jk}$ is nonzero only if both $\tilde\beta_{jkh}$ and $\tilde\beta_{kjh}$ are nonzero.
An alternative is the ``or-rule,'' defined by
\[
    [\tvec B_h]_{jk} = [\tvec B_h]_{kj} = \tilde{\beta}_{jkh}
    \mathds{1}{\bracks*{
    \abs{\tilde\beta_{jkh}} \geq \abs{\tilde\beta_{kjh}} }} +
    \tilde{\beta}_{kjh}
    \mathds{1}{\bracks*{
    \abs{\tilde\beta_{jkh}} \leq \abs{\tilde\beta_{kjh}}}},
\]
which retains an edge if either coefficient is nonzero.
We adopt the more conservative ``and-rule'', noting that both approaches are asymptotically equivalent \citep{meinshausen} and have been widely used in the GGM literature \citep{cai2011constrained, liu2012tiger, Li_2021, tran2022completely, cai13, zhangli}.

Finally, the mean vector and precision matrix are recovered via \eqref{eq:reparam} as
\begin{align}
  \label{eq:predict}
\hvec\Omega(\vec u^{(i)}) &= \tvec B_0 + \sum_{h=1}^q \tvec B_h u^{(i)}_h \quad
\text{and} \quad
\hvec\mu(\vec u^{(i)}) = (\hvec\Omega(\vec u^{(i)}))^{-1} \diag(\hvec\Omega) \hvec\Gamma \vec u^{(i)}.
\end{align}

\robin{
As with other graphical model methods based on independent neighborhood selection, the resulting estimator $\hvec\Omega(\vec u^{(i)})$ in \eqref{eq:predict} may fail to be positive definite. One consequence is that $\hvec\mu(\vec u^{(i)})$ is not immediately estimable. In such cases, a post-hoc rescaling may be applied to ensure positive definiteness as described in \citet{zhangli}.
}

\section{Theoretical Properties}
\label{sec:theory}
We present theoretical guarantees for the estimator defined in \eqref{eq:cspineobj}, which characterize the accuracy of the recovered graph structure and its dependence on covariates.
We begin by introducing notation and stating assumptions on the data-generating process. We then present our main results, including $\ell_2$ estimation error bounds and support recovery guarantees, followed by a discussion comparing our results with those of \citet{zhangli}.

For two sequences of real numbers $a_n$ and $b_n$, we write $a_n \precsim b_n$
if $a_n = O(b_n)$, that is, there exist constants $C>0$ and $N>0$ so that $a_n < C b_n$
for all $n \geq N$. If $a_n \precsim b_n$ and $b_n\precsim a_n$, we write $a_n\asymp b_n$.
We write $a_n = o(b_n)$ if $\lim_{n\to\infty} a_n/b_n = 0$.
Our analysis holds for any $j \in [p]$. For notational simplicity, we suppress the subscript $j$ when referring to $\vec\gamma_j$, $\vec\beta_j$,  $\vec\ep_j$ and $\mat W_{-j}$ in this section.

Let $(\hat{\vec \gamma}, \hat{\vec \beta})$ be the solution to \eqref{eq:cspineobj} and
let $(\vec \gamma, \vec \beta)$ be the true parameters.
Denote by $S_\gamma$ and $S_\beta$ the support sets of $\vec \gamma$ and $\vec\beta$ respectively.
Further let $S_{\beta,g}$ index the active groups of $\vec\beta$, that is $S_{\beta,g} = \sets*{h: \vec\beta_{j,h} \neq \vec 0,\, h\in[q]}$.
Denote by $s_\gamma,\, s_\beta$ and $s_{\beta, g}$ the cardinalities of these sets.
For a square matrix $\mat M$, denote by $\lammin(\mat M)$ and $\lammax(\mat M)$ its minimum and maximum eigenvalue, respectively.
We require the following assumptions on the model \eqref{eq:multitask} with specification \eqref{eq:cspinespec}.
\begin{assumption}
\label{assump:bounded_u_element}
Suppose $\sets{\vec u^{(i)}}_{i=1}^n$ are i.i.d. mean zero random vectors with uniformly bounded entries: there exists a constant $M_u > 0$ such that $\abs{u^{(i)}_h} \leq M_u$ for all $h\in[q]$ and $i\in[n]$.
\end{assumption}
\begin{assumption}
\label{assump:subgaussian_x}
The responses $\vec x^{(i)}$ are elementwise sub-Gaussian with bounded sub-Gaussian norm: there exists a constant $K_X > 0$ such that $\norm{x^{(i)}_j}_{\psi_2} \leq K_X$ for all $j\in[p]$ and $i\in[n]$.
\end{assumption}
Let $[\mat U, \mat W] \in \R^{n \times (p (q+1) - 1)}$ denote the augmented design matrix in \eqref{eq:cspineobj}.
Define the population Gram matrix $\SUW = \EE\parens*{[\mU, \mW]^\T[\mU, \mW]/n}$.
\begin{assumption}
\label{assump:spectralboundSUW}
There exist constants $0 < \phi_0^\ast \leq \phi_1^\ast < \infty$ such that
\[
\phi_0^\ast \leq \lammin(\SUW) \leq \lammax(\SUW) \leq \phi_1^\ast.
\]
\end{assumption}

\begin{assumption}
\label{assump:sparsity_scale}
The sparsity levels satisfy
$(s_\gamma + s_\beta)\log(pq) = o(\sqrt{n}/\log n)$.
\end{assumption}

Assumptions~\ref{assump:bounded_u_element}--\ref{assump:subgaussian_x} control the tail behavior of the covariates, responses, and their pairwise interactions, thus facilitating their concentration behavior.
Similar assumptions are also imposed in \citet{zhangli}.
Assumption~\ref{assump:spectralboundSUW} ensures the augmented design matrix is well-conditioned, preventing excessive collinearity among predictors in the nodewise regression \eqref{eq:cspineobj}.
Assumption~\ref{assump:spectralboundSUW} is closely related to the bounded eigenvalue assumptions on $\Cov(\bm U)$ and $\Cov(\bm X \given \bm U)$ in \citet{zhangli}. In Appendix~\ref{sec:sufficientconditions}, we provide interpretable sufficient conditions under which Assumption~\ref{assump:spectralboundSUW} holds.
Finally, Assumption~\ref{assump:sparsity_scale} governs the scaling of sparsity levels with the sample size. Combined with Theorem~4.1 of \citet{kuchi}, it allows us to control sparse deviations between sample and population Gram matrices (see Lemmas~\ref{lem:kuchi} and~\ref{lem:operatorboundUW} in Appendix~\ref{appdx:proofs}).
A sufficient condition for Assumption~\ref{assump:sparsity_scale} is $\log(pq) = O(n^{1/6})$ and $s_\gamma + s_\beta = o(n^{1/6})$, as assumed in \citet{zhangli}.

In practice we consider estimates $(\hvec\gamma, \hvec\beta)$ of varying sparsities
and select from this pool of candidate models via cross-validation.
Define $\hat s_{\gamma}^\mathrm{max}$ and $\hat s_{\beta}^\mathrm{max}$ to be the maximum
sparsity of $\vec\gamma$ and $\vec\beta$ out of all candidate models, chosen so that
$s_\gamma < \hat s_{\gamma}^\mathrm{max}$ and $s_\beta < \hat s_{\beta}^\mathrm{max}$.
Our first theorem describes the $\ell_2$ estimation error of a solution to \eqref{eq:cspineobj}.
\begin{theorem}
\label{thm:l2error}
Under Assumptions \ref{assump:bounded_u_element}--\ref{assump:sparsity_scale}, with $(\hat s_{\gamma}^\mathrm{max} + \hat s_\beta^\mathrm{max})\log(pq) = O(\sqrt{n})$, and with
\begin{equation}
\label{eq:penalties}
\lambda = C\frac{\sigma_\ep}{\sqrt{n}}\parens*{
\frac{s_\gamma\log(eq/s_\gamma) + s_\beta\log(ep)}{s_\gamma + s_\beta} + \frac{s_{\beta, g}}{s_\gamma + s_\beta}\log(eq/s_{\beta, g})}^{1/2},
\quad
\lambda_g = \lambda\sqrt{\frac{s_\gamma + s_\beta}{s_{\beta, g}}},
\end{equation}
we have that
\[
\norm{\hvec{\gamma}-\vec\gamma}_2^2 + \norm{\hvec{\beta}-\vec\beta}_2^2 \precsim \frac{\sigma_{\ep}^2}{n} \left\{ s_\gamma\log(eq/s_\gamma) + s_\beta\log(ep) + s_{\beta,g}\log(eq/s_{\beta,g}) \right\}
\]
holds with probability at least
$1 - C_1\exp\bracks*{-C_2\{s_\gamma \log(eq/s_\gamma) + s_\beta\log(e p) + s_{\beta,g}\log(eq/s_{\beta,g})\}}$ where
$C$, $C_1$, and $C_2$ are positive constants.
\end{theorem}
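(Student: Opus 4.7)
The plan is to carry out a standard primal analysis for sparse-group lasso penalized least squares. Writing $\vec x_j = \mat U\vec\gamma + \mat W_{-j}\vec\beta + \vec\ep$ and setting $\hvec\Delta_\gamma = \hvec\gamma - \vec\gamma$, $\hvec\Delta_\beta = \hvec\beta - \vec\beta$, comparing the objective value at the optimum $(\hvec\gamma,\hvec\beta)$ with its value at the truth $(\vec\gamma,\vec\beta)$ yields the basic inequality
\[
\frac{1}{2n}\norm{\mat U\hvec\Delta_\gamma + \mat W_{-j}\hvec\Delta_\beta}_2^2 \le \tfrac{1}{n}\vec\ep^\T(\mat U\hvec\Delta_\gamma + \mat W_{-j}\hvec\Delta_\beta) + \lambda(\norm{\vec\gamma}_1 - \norm{\hvec\gamma}_1) + \lambda(\norm{\vec\beta}_1 - \norm{\hvec\beta}_1) + \lambda_g(\norm{\vec\beta_{j,-0}}_{1,2} - \norm{\hvec\beta_{j,-0}}_{1,2}).
\]
Assumption~\ref{assump:3} combined with the bounded covariates in Assumption~\ref{assump:1} makes every column of $[\mat U,\mat W_{-j}]$ sub-Gaussian (each column of $\mat W_{-j}$ is a product of a sub-Gaussian $X_k$ with a bounded $u_h$), so standard maximal inequalities for sub-Gaussian linear forms give $\norm{\mat U^\T\vec\ep/n}_\infty \vee \norm{\mat W_{-j}^\T\vec\ep/n}_\infty \precsim \sigma_\ep\sqrt{\log(pq)/n}$ elementwise, and $\max_h\norm{\mat W_{-j,h}^\T\vec\ep/n}_2 \precsim \sigma_\ep\sqrt{(p+\log q)/n}$ in the grouped sense. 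The choices of $\lambda$ and $\lambda_g$ in \eqref{eq:penalties} are calibrated precisely to dominate these two noise scales with a comfortable constant; decomposing the penalty differences along $S_\gamma, S_\beta, S_{\beta,g}$ and their complements then produces a sparse-group-lasso cone condition in which $\lambda\norm{\hvec\Delta_{\gamma,S_\gamma^c}}_1 + \lambda\norm{\hvec\Delta_{\beta,S_\beta^c}}_1 + \lambda_g\sum_{h\notin S_{\beta,g}}\norm{\hvec\Delta_{\beta,h}}_2$ is dominated by the analogous on-support quantities.

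The next step is a restricted eigenvalue (RE) condition on the joint design $[\mat U,\mat W_{-j}]$ over this cone. Assumption~\ref{assump:2} bounds the population Gram $\SUW$ below by $m_0$; a Rudelson--Zhou / van de Geer-type concentration argument for sub-Gaussian designs transfers this lower bound to the empirical Gram $\hSUW$ uniformly on sparse subsets of size at most $\hat s_\gamma^{\mathrm{max}} + \hat s_\beta^{\mathrm{max}}$, provided the scaling hypothesis $(\hat s_\gamma^{\mathrm{max}} + \hat s_\beta^{\mathrm{max}})\log(pq) = O(\sqrt n)$ holds---this is exactly why that hypothesis appears in the statement. This yields $n^{-1}\norm{\mat U\hvec\Delta_\gamma + \mat W_{-j}\hvec\Delta_\beta}_2^2 \ge \kappa(\norm{\hvec\Delta_\gamma}_2^2 + \norm{\hvec\Delta_\beta}_2^2)$ for some $\kappa > 0$ on the cone with high probability. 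Combining this RE bound with the basic inequality, and controlling the on-support $\ell_1$ and $\ell_{1,2}$ norms by $\sqrt{s_\gamma},\sqrt{s_\beta},\sqrt{s_{\beta,g}}$ multiples of $\ell_2$ norms via Cauchy--Schwarz, produces a quadratic-in-$\norm{\hvec\Delta}_2$ inequality whose solution reproduces the claimed rate once the explicit forms of $\lambda$ and $\lambda_g$ are substituted.

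The main technical obstacle I foresee is the RE verification for the mixed design: unlike a design whose columns are directly sub-Gaussian with a tractable covariance, the columns of $\mat W_{-j}$ are products $\vec x_k\odot\vec u_h$, so their sub-Gaussian norms and joint concentration must be tracked through Assumptions~\ref{assump:1} and~\ref{assump:3}. Establishing that $\hSUW$ concentrates uniformly over the sparse-group cone at the requisite rate---and checking that Assumption~\ref{assump:4}'s scaling is precisely what makes this uniform control possible---is the technical heart of the argument; the rest is bookkeeping that combines the deterministic cone inequality, the RE lower bound, and the noise concentration event into a single high-probability bound with the exponent stated in the theorem.
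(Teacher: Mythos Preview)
Your plan has a genuine gap in the noise-control step. You assert that the tuning parameters in \eqref{eq:penalties} ``are calibrated precisely to dominate these two noise scales,'' but this is false for the group term. Each block $\mat W_{-j,h}$ has $p-1$ columns, so $\max_h\norm{\mat W_{-j,h}^\T\vec\ep/n}_2 \asymp \sigma_\ep\sqrt{(p+\log q)/n}$ as you correctly state. From \eqref{eq:penalties}, however,
\[
\lambda_g^2 \;=\; \frac{C^2\sigma_\ep^2}{n}\cdot\frac{s_\gamma\log(eq/s_\gamma)+s_\beta\log(ep)+s_{\beta,g}\log(eq/s_{\beta,g})}{s_{\beta,g}},
\]
which is only of order $\sigma_\ep^2\,(\text{sparsity}\times\text{polylog})/n$ and is therefore far too small to dominate $\sigma_\ep^2 p/n$ in the interesting regime $s_\beta\ll p$. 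Without $\lambda_g$ controlling this group noise, your H\"older bound on $n^{-1}\vec\ep^\T\mat W_{-j}\hvec\Delta_\beta$ cannot be absorbed into the penalty, the cone condition does not follow, and the subsequent restricted-eigenvalue argument has nothing to act on. A similar, though milder, mismatch can occur between $\lambda$ and $\norm{\mat W_{-j}^\T\vec\ep/n}_\infty\asymp\sigma_\ep\sqrt{\log(pq)/n}$ in certain sparsity regimes, since the numerator of $\lambda^2$ carries $s_\gamma\log(eq/s_\gamma)$ rather than $s_\gamma\log q$.

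The paper sidesteps this entirely by never bounding the noise through dual norms. It writes $\langle\vec\ep,\mat U\hvec\Delta_\gamma+\mat W_{-j}\hvec\Delta_\beta\rangle=\langle\cP_{\cI_0,\cJ_0}\vec\ep,\mat U\hvec\Delta_\gamma+\mat W_{-j}\hvec\Delta_\beta\rangle$, where $\cP_{\cI_0,\cJ_0}$ projects onto the columns of $[\mat U,\mat W_{-j}]$ indexed by the \emph{union} of the true and estimated supports; then $\norm{\cP_{\cI_0,\cJ_0}\vec\ep}_2^2/\sigma_\ep^2$ is $\chi^2$ with at most $s_\gamma+\hat s_\gamma+s_\beta+\hat s_\beta$ degrees of freedom, and a combinatorial union bound over all support patterns of given sizes yields exactly the $s_\gamma\log(eq/s_\gamma)+s_\beta\log(ep)+s_{\beta,g}\log(eq/s_{\beta,g})$ complexity. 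The data-dependent sparsities $\hat s_\gamma,\hat s_\beta,\hat s_{\beta,g}$ are controlled by squaring and summing the KKT stationarity equations, giving $\lambda^2(\hat s_\gamma+\hat s_\beta)+\lambda_g^2\hat s_{\beta,g}$ bounded by a multiple of $\norm{\cP_{\cI_0,\cJ_0}\vec\ep}_2^2$ plus the prediction error. This projection-plus-KKT device (in the style of Bellec/Lecu\'e--Mendelson) is what permits the small $\lambda_g$ in \eqref{eq:penalties} and delivers a rate free of any raw $p$ term; it is the missing ingredient in your outline. Your RE step via concentration of $\hSUW$ on sparse vectors is essentially what the paper does in its Step~3, so that part of your plan is fine.
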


We next establish support recovery guarantees, under an additional mutual incoherence condition.
Denote the $(k,\ell)$-th entry of $\SUW$ by $\SUW(k,\ell)$.
\begin{assumption}
\label{assump:coherence}
Define
$\tau_j = 1+\sqrt{(s_\gamma + s_\beta)/s_{\beta, g}}$.
Suppose
\[
\max_{k\neq \ell}\abs{\SUW(k,\ell)} \leq \frac{1}{c_0(1+8\tau_j)(s_\beta + s_\gamma)}
\]
for a constant $c_0 > 2/\phi^\ast_0$.
\end{assumption}
Assumption~\ref{assump:coherence} imposes a mutual incoherence condition \citep{Van2009On} on the augmented design matrix. It controls correlations between active and inactive coordinates and is essential for deriving $\ell_\infty$ error bounds necessary for support recovery.

\begin{theorem}
\label{thm:supportrecovery}
Suppose the assumptions of Theorem~\ref{thm:l2error}, together with Assumption~\ref{assump:coherence}, hold, and that $\log p \asymp \log q$. Further assume that
\[n \geq A_1 \parens*{s_{\gamma} \log(eq/s_{\gamma}) + s_{\beta}\log(ep) + s_{\beta,g}\log(eq/s_{\beta, g})}\]
for some constant $A_1 > 0$.
With the same tuning parameters $\lambda$ and $\lambda_g$ as in \eqref{eq:penalties},
we have that
\[
\max\sets*{\norm{\hvec\gamma - \vec\gamma}_\infty, \norm{\hvec\beta - \vec\beta}_\infty}
\leq
\frac{3}{\phi^\ast_0}(\lambda+\lambda_g)\parens*{1 + \frac{6(1+4\tau_j)^2}{(1+8\tau_j)(c_0\phi^\ast_0 - 2)}}
\]
holds with probability at least $1-C_3\exp(-C_4 \log p)$ for some positive constants $C_3, C_4$.
\end{theorem}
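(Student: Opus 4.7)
The plan is to adapt the primal-dual witness construction to the sparse-group lasso \eqref{eq:cspineobj}, using the mutual coherence bound of Assumption~\ref{assump:coherence} in place of the usual irrepresentable condition. Writing $\mat Z = \mUW$ (with $\mW = \mW_{-j}$) and $\vec\theta = (\vec\gamma, \vec\beta)$, the KKT conditions for \eqref{eq:cspineobj} yield
\[
\tfrac{1}{n}\mat Z^\T \mat Z(\hvec\theta - \vec\theta) = \tfrac{1}{n}\mat Z^\T \vec\ep - \vec r,
\]
where $\vec r$ lies in the subdifferential of the combined sparse-group penalty and satisfies $\norm{\vec r}_\infty \leq \lambda + \lambda_g$. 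Let $S$ denote the joint support of $\vec\theta$, with $|S| = s := s_\gamma + s_\beta$.

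The first task is to establish two concentration estimates. By Assumption~\ref{assump:3} combined with the elementwise boundedness in Assumption~\ref{assump:1}, each entry of $\mat Z$ is sub-Gaussian, so standard tail bounds give $\norm{\tfrac{1}{n}\mat Z^\T \mat Z - \SUW}_\infty \precsim \sqrt{\log(pq)/n}$ and $\norm{\tfrac{1}{n}\mat Z^\T\vec\ep}_\infty \precsim \sigma_\ep\sqrt{\log(pq)/n}$, each with probability at least $1 - C_3\exp(-C_4\log p)$ under $\log p \asymp \log q$. Under the sample-size hypothesis of the theorem, both quantities are of order $\lambda$. Together with Assumption~\ref{assump:coherence}, this transfers the coherence bound from $\SUW$ to $\tfrac{1}{n}\mat Z^\T\mat Z$, so its off-diagonal entries remain of order $1/(c_0(1+8\tau_j)s)$ while its diagonal entries are bounded below by a constant multiple of $m_0$.

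Next, run the primal-dual witness: solve the restricted problem on $S$ to obtain $\tvec\theta_S$, noting that the restricted Hessian is invertible by Assumption~\ref{assump:2} and the above concentration, extend by zero on $S^c$, and verify strict dual feasibility for both the $\ell_1$ and group-$\ell_2$ subgradients on $S^c$. Once the witness succeeds we have $\hvec\theta_{S^c} = \vec 0$, and the restricted KKT equation rearranges to
\[
\hvec\theta_S - \vec\theta_S = \parens*{\tfrac{1}{n}\mat Z_S^\T \mat Z_S}^{-1}\parens*{\tfrac{1}{n}\mat Z_S^\T \vec\ep - \vec r_S}.
\]
Decomposing $\tfrac{1}{n}\mat Z_S^\T \mat Z_S = \mat D + \mat E$ into diagonal and off-diagonal parts, a Neumann-series argument using $\norm{\mat E}_{\infty\to\infty} \leq \nu s$ with $\nu$ from Assumption~\ref{assump:coherence}, together with the diagonal lower bound on $\mat D$, gives an $\ell_\infty\to\ell_\infty$ bound on $(\tfrac{1}{n}\mat Z_S^\T\mat Z_S)^{-1}$ of order $1/m_0$. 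The condition $c_0 > 2/m_0$ keeps the Neumann series convergent and generates the denominator $c_0 m_0 - 2$ in the stated bound. Combining this operator bound with $\norm{\vec r_S}_\infty \leq \lambda + \lambda_g$ and the noise estimate yields the claimed $\ell_\infty$ bound; the factor $6(1+4\tau_j)^2/((1+8\tau_j)(c_0 m_0 - 2))$ emerges from tracking the coherence corrections together with the group-penalty scaling $\lambda_g = \lambda\sqrt{s/s_{\beta,g}}$ from \eqref{eq:penalties}.

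The main obstacle is verifying strict dual feasibility on $S^c$ for the group-$\ell_2$ subgradient of the inactive groups. Unlike the pure $\ell_1$ case, each inactive group $h \in [q]\setminus S_{\beta,g}$ requires bounding a $(p-1)$-dimensional vector in $\ell_2$ rather than $\ell_\infty$, and the bound must hold uniformly across groups. This is where $\tau_j = 1 + \sqrt{(s_\gamma + s_\beta)/s_{\beta,g}}$ and the particular ratio $\lambda_g/\lambda$ in \eqref{eq:penalties} become essential: they balance the two types of subgradient so that the scaling $1/((1+8\tau_j)s)$ in Assumption~\ref{assump:coherence} is just sufficient to absorb both the $\ell_1$ and the group contributions on $S^c$, at which point the rest of the argument reduces to the concentration and linear-algebra steps outlined above.
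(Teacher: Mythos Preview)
Your approach is genuinely different from the paper's, and in an important way. The paper does \emph{not} run a primal-dual witness. It never solves a restricted problem on $S$, never verifies strict dual feasibility on $S^c$, and never asserts that $\hvec\theta_{S^c}=\vec 0$ (indeed, Corollary~\ref{cor:supportrecovery} obtains support recovery only after an extra thresholding step, which would be unnecessary if the PDW had succeeded). Instead the paper argues directly at the level of the unrestricted estimator. From the full KKT conditions it first shows $\norm{\hSUW(\vec\nu,\vec\Delta)}_\infty\le \tfrac{3}{2}(\lambda+\lambda_g)$ and, using the basic optimality inequality together with $\norm{\tfrac1n\mU^\T\vec\ep}_\infty,\norm{\tfrac1n\mW^\T\vec\ep}_\infty\le\lambda/2$, derives the cone condition $\norm{(\vec\nu,\vec\Delta)_{S^c}}_1\le 4\tau_j\norm{(\vec\nu,\vec\Delta)_S}_1$. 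Elementwise concentration transfers the coherence bound to $\hSUW$, giving $m_0/2\le\hSUW(\ell,\ell)$ and off-diagonals at most $3/\{c_0(1+8\tau_j)s\}$. A Lounici-style elementwise inequality,
\[
\tfrac{m_0}{2}\abs{(\vec\nu,\vec\Delta)_\ell}\le \abs{(\hSUW(\vec\nu,\vec\Delta))_\ell}+\tfrac{3}{c_0(1+8\tau_j)s}\norm{(\vec\nu,\vec\Delta)}_1,
\]
then reduces the $\ell_\infty$ bound to controlling $\norm{(\vec\nu,\vec\Delta)}_1$. The latter is handled by the cone condition together with a restricted-eigenvalue inequality $\tfrac1n\norm{\mat Z(\vec\nu,\vec\Delta)}_2^2\ge (m_0/2-1/c_0)\norm{(\vec\nu,\vec\Delta)_S}_2^2$ (also obtained from coherence). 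The displayed constants fall out of this chain; the factor $(1+4\tau_j)^2$ comes from applying the cone condition twice when passing from $\norm{\cdot}_1$ to $\norm{\cdot}_2$, and $(c_0 m_0-2)$ from the RE lower bound.

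Your PDW plan may be viable, but it is heavier than needed here, and the step you flag as ``the main obstacle'' is a genuine gap you have not closed: for each inactive group $h$ you must control a $(p-1)$-dimensional $\ell_2$ norm, not an $\ell_\infty$, and there is no argument in the proposal that Assumption~\ref{assump:coherence} (which is an $\ell_\infty$-type coherence bound) is strong enough to deliver this uniformly. The paper sidesteps the whole issue by never needing dual feasibility on $S^c$; the group structure enters only through the ratio $\lambda_g/\lambda$ in the cone-condition step, which is purely an $\ell_1$ statement. If you intend to pursue PDW you would need either to strengthen the coherence assumption or to supply a separate $\ell_2$-block argument; otherwise the simpler coherence/cone route matches the theorem as stated and yields exactly the claimed constants.
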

As in the standard lasso regression, with an additional minimal signal strength condition, the $\ell_\infty$ bound in Theorem~\ref{thm:supportrecovery} implies that the support of $\vec\gamma$ and $\vec\beta$ can be recovered with high probability by thresholding the estimated coefficients at an appropriate level.
\begin{corollary}
\label{cor:supportrecovery}
Define
\[
\kappa_0 = \frac{3}{\phi^\ast_0}(\lambda+\lambda_g)\parens*{1 + \frac{6(1+4\tau_j)^2}{(1+8\tau_j)(c_0\phi^\ast_0 - 2)}}.
\]
Let $\tilde S_\gamma = \sets*{k : \abs{\hat\gamma_k} > \kappa_0}$ and $\tilde S_\beta = \sets*{k : \abs{\hat\beta_k} > \kappa_0}$.
If
$\min\sets*{\min_{\ell\in S_\gamma} \abs{\gamma_\ell}, \min_{k\in S_\beta} \abs{\beta_k}}
\geq
2\kappa_0$,
then $\PP(S_\gamma = \tilde S_\gamma\; \text{and}\; S_\beta = \tilde S_\beta) \geq 1 - C_3\exp(-C_4\log p)$.
\end{corollary}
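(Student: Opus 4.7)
The plan is to deduce Corollary~\ref{cor:supportrecovery} directly from the $\ell_\infty$ bound in Theorem~\ref{thm:supportrecovery} using the triangle inequality, on the same good event. Let $\cE$ denote the event on which
\[
\max\sets*{\norm{\hvec\gamma - \vec\gamma}_\infty,\ \norm{\hvec\beta - \vec\beta}_\infty} \leq \kappa_0,
\]
which by Theorem~\ref{thm:supportrecovery} satisfies $\PP(\cE) \geq 1 - C_3\exp(-C_4 \log p)$. All that remains is to show that on $\cE$, the thresholded sets $\tilde S_\gamma$ and $\tilde S_\beta$ coincide with the true supports.

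First I would handle the false-positive direction. For any $k \notin S_\gamma$ we have $\gamma_k = 0$, so on $\cE$,
\[
\abs{\hat\gamma_k} = \abs{\hat\gamma_k - \gamma_k} \leq \kappa_0,
\]
so $k \notin \tilde S_\gamma$; contrapositively, $\tilde S_\gamma \subseteq S_\gamma$. The identical argument applied coordinate-wise to $\vec\beta$ yields $\tilde S_\beta \subseteq S_\beta$. Second, for the no-false-negative direction, take $\ell \in S_\gamma$. By the minimum-signal hypothesis, $\abs{\gamma_\ell} \geq 2\kappa_0$, so on $\cE$ the triangle inequality gives
\[
\abs{\hat\gamma_\ell} \geq \abs{\gamma_\ell} - \abs{\hat\gamma_\ell - \gamma_\ell} \geq 2\kappa_0 - \kappa_0 = \kappa_0.
\]
The same reasoning applied to $\vec\beta$ gives $\abs{\hat\beta_k} \geq \kappa_0$ for every $k \in S_\beta$, showing $S_\gamma \subseteq \tilde S_\gamma$ and $S_\beta \subseteq \tilde S_\beta$. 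Combining both inclusions yields $\tilde S_\gamma = S_\gamma$ and $\tilde S_\beta = S_\beta$ on $\cE$, whence the probability bound is inherited verbatim from Theorem~\ref{thm:supportrecovery}.

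There is essentially no hard step here; the only subtlety worth flagging is that the definition of $\tilde S_\gamma$ uses strict inequality $\abs{\hat\gamma_k} > \kappa_0$ while the computation above only delivers $\abs{\hat\gamma_\ell} \geq \kappa_0$. This gap can be closed either by slightly strengthening Theorem~\ref{thm:supportrecovery} to a strict bound (the Gaussian tail bounds driving it hold with strict inequality almost surely), or equivalently by defining $\tilde S_\gamma$ with $\geq$, or by replacing the signal-strength threshold $2\kappa_0$ with any $2\kappa_0 + \eta$ for arbitrarily small $\eta > 0$; any of these cosmetic adjustments leaves the statement and probability bound intact, so I would simply note this at the end rather than overhaul the argument.
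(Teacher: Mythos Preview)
Your proposal is correct and matches the paper's intent: the corollary is stated as an immediate consequence of Theorem~\ref{thm:supportrecovery} and the paper does not give a separate proof, so the standard triangle-inequality thresholding argument you wrote is exactly what is implicitly being invoked. Your remark about the strict-versus-nonstrict inequality is a fair cosmetic point but not a substantive gap.
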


\subsection{Understanding the Theoretical Guarantees}
\label{sec:theoretical_discussion}
We conclude this section with a discussion of the $\ell_2$ estimation error and a comparison with the results of \citet{zhangli}.
The same insights apply to the support recovery guarantees in Theorem~\ref{thm:supportrecovery}.

In \citet{zhangli}, two rates of estimation error are established: an \emph{oracle} rate and a \emph{two-stage} rate.
The oracle rate is achieved when the true $\mat\Gamma$ in \eqref{eq:zhanglispec} is known, so that the centered responses can be constructed exactly and the estimation in \eqref{eq:zhanglinodewise_oracle} can be carried out directly.
The resulting rate is
\begin{equation}
  \label{eq:zhangli_oracle_rate}
  \norm{\hvec \eta_j - \vec\eta_j}_2^2 \precsim \frac{\sigma^2_{\ep_j}}{n}(s_\beta\log(ep) + s_{\beta,g}\log(eq/s_{\beta, g})),
\end{equation}
which holds with probability at least $1 - C_1\exp[-C_2\{s_\beta \log(ep) + s_{\beta,g}\log(eq/s_{\beta, g})\}]$ for some positive constants $C_1, C_2$.

In practice, however, $\mat\Gamma$ is unknown and must be estimated. As reviewed in Section~\ref{sec:zhangli}, \citet{zhangli} employ a two-stage procedure in which $\mat \Gamma$ is estimated via the misspecified lasso regression
\begin{equation}
\label{eq:zhanglistage1}
\hvec\gamma_j = \argmin_{\vec\gamma_j} \frac{1}{2n}\norm*{\vec x_j - \mat U\vec \gamma_j}_2^2 + \lambda_1\norm{\vec\gamma_j}_1,
\end{equation}
where $\lambda_1 \geq 0$,
and the resulting estimate is used to construct centered responses $\hat{\vec z}_j = \vec x_j - \mat U \hat{\vec \gamma}_j$, which are then used in the second-stage regression \eqref{eq:zhanglinodewise_oracle}.

Although the resulting two-stage estimator can attain the oracle rate \eqref{eq:zhangli_oracle_rate}, the associated probability bound depends critically on the tuning parameter $\lambda_1$ in \eqref{eq:zhanglistage1}.
Specifically, suppose that $\lambda_1 \propto \sqrt{\kappa_1\log q / n}$ for some $\kappa_1 > 0$.
Standard lasso results imply that the first-stage estimation error satisfies $\norm{\hvec z_j - \vec z_j}_2^2/n \precsim s_\gamma \cdot \kappa_1 \log q / n$ with probability at least $1 - 3\exp(-\kappa_1\log q)$.
To ensure that this error does not dominate the second-stage estimation, a union bound must be applied over all $p \times q$ columns of the design matrix. This leads to the overall success probability of at least $1 - C_3\exp[C_4\{\log p - (\kappa_1 - 1)\log q\}]$ for positive constants $C_3, C_4$.

This dependence introduces a nontrivial tradeoff: if $\kappa_1$ is too small, the concentration probability deteriorates. If $\kappa_1$ is too large, the first-stage penalty becomes overly aggressive, increasing the bias in the first-stage estimator and potentially inflating the overall error. In addition, when $\log p$ dominates $\log q$, the success probability may no longer decay to zero, further weakening the theoretical guarantees.

By contrast, our estimator matches the oracle rate without requiring a separate centering stage, thereby avoiding error propagation from the first stage and eliminating the need to tune an additional penalty parameter.
Furthermore, the oracle rate in \citet{zhangli} is established under the scaling conditions $\log(pq) = O(n^{1/6})$ and $s_\gamma + s_\beta = o(n^{1/6})$. In comparison, Assumption~\ref{assump:sparsity_scale} is strictly weaker and allows for a wider range of scaling regimes. It permits
$\log(pq)$ and $s_\gamma + s_\beta$ to grow at faster rates relative to $n$, e.g.,
$\log(pq) = O(n^{1/4} / \log(n))$ and $s_\gamma + s_\beta = o(n^{1/4})$, while still ensuring consistency.

These theoretical advantages in our framework stem from the joint convexity of \eqref{eq:cspineobj}, which enables simultaneous estimation of $(\vec\gamma, \vec\beta)$ without intermediate approximation steps. In contrast, the two-stage procedure in \citet{zhangli} requires additional control over the first-stage estimation error, leading to more restrictive conditions.
We provide additional empirical evidence of these advantages in Section~\ref{sec:simulation}, particularly in settings where accurate first stage estimation in \citet{zhangli} is challenging.

\section{Simulation Studies}
\label{sec:simulation}
\robin{
We evaluate our method \texttt{cspine} and competitors through extensive simulation studies.
Our implementation of the optimization algorithm described in Section~\ref{sec:opt} to solve sparse-group lasso problems is based on the MATLAB package of \citet{zhang_efficient_2020}.
}

\subsection{Data Generation}
\label{sec:data_gen}
For each simulation setting, we generate a sparse coefficient matrix $\mat\Gamma \in \R^{p \times q}$ by independently setting each entry $\Gamma_{jk}$ equal to $0.15$ with probability $0.1$.
We generate the baseline precision matrix $\mat B_0$ using a preferential attachment algorithm \citep{barabasi} with power parameter equal to $1$. \robin{Among the $q$ covariates, half are drawn from $\Bernoulli(0.5)$ and half from $\mathrm{Unif}(0,1)$. The continuous covariates are then centered and standardized to have unit variance.} To introduce covariate-dependent effects, we select $q_e = 5$ covariates to have nonzero influence and construct the matrices $\mat B_h$ for $h \in [q_e]$ as Erd\H{o}s--R\'enyi graphs \citep{erdos} with edge probability $v_e = 0.01$. This choice ensures that the covariate-dependent graphs are sparser than the population graph \citep{clauset}.

Given the graph structure, the nonzero entries of ${\mat B_h}$ are generated independently from the distribution $\mathrm{Unif}([-0.5,-0.35] \cup [0.35,0.5])$. To ensure diagonal dominance of the resulting precision matrix, each row $j$ is scaled by
$\sum_h\sum_{k\neq j}\abs{\beta_{jkh}} \times 1.5$ followed by symmetrization of each $\mat B_h$ via averaging $\beta_{jkh}$ and $\beta_{kjh}$. This construction mirrors the data-generating mechanism in \citet{zhangli}.

\subsection{Model Specification and Experimental Setup}
\robin{For each observation $i \in [n]$, we construct the precision matrix $\mat\Omega^{(i)}$ via
$\mat B_0 + \sum_{h=1}^q \mat B_h u_h^{(i)}$, setting all entries of $\diag(\mat \Omega^{(i)}) $ to be $1$, so that $\sigma^2_{\ep_j} = 1$.
To construct the mean, we introduce $\delta \in [0,1]$ and set
\begin{equation}
\label{eq:delta}
\vec\mu(\vec u^{(i)}) = (1-\delta) \mat\Gamma \vec u^{(i)} + \delta \mat\Sigma(\vec u^{(i)}) \mat\Gamma \vec u^{(i)}.
\end{equation}
The parameter $\delta$ controls the relative influence of direct covariate effects versus covariance-mediated network effects. Note that $\delta = 0$ results in data sets where \texttt{RegGMM} is correctly specified while $\delta = 1$ means \texttt{cspine} is correctly specified. Both models are misspecified for intermediate values of $\delta$.
Finally we generate responses via $\vec x^{(i)} \sim N_p(\vec\mu^{(i)}, \mat\Sigma^{(i)})$ for $i \in [n]$.
}

\robin{
We fix $p = 25$ responses and consider $q= 50$ and $q = 100$ covariates.
For each setting, we generate $100$ independent data sets with sample sizes $n = 200$ and $n = 400$.
We compare against the nodewise regression method of \citet{zhangli} (referred to as \texttt{RegGMM}) as well as the multi-task extension in \citet{zhang-multi-task-2025} (\texttt{mt-RegGMM}).
We found that the method of \citet{wang_high-dimensional_2025} has intractable runtime in the large $q$ setting and therefore exclude it from comparison.
When $\delta = 0$ in \eqref{eq:delta}, we additionally compare against an oracle estimator (\texttt{RegGMM-oracle}) where the true $\mat\Gamma$ is plugged into \texttt{RegGMM}; the performance of \texttt{RegGMM-oracle} reflects the baseline oracle rate in \eqref{eq:zhangli_oracle_rate}.
The tuning parameters for all methods were selected using $5$-fold cross-validation over a path of $100$ penalties.
For \texttt{RegGMM} and \texttt{mt-RegGMM}, the stage 1 step uses $10$-fold cross-validation to select $\lambda_1$ over a path of $100$ penalties in \eqref{eq:zhanglistage1}.
}
We report the mean and standard error of the following performance metrics:
$\text{TPR}$ (true positive rate of detected edges across all $\sets{\mat B_h}$), $\text{FPR}$ (false positive rate of detected edges across all $\sets{\mat B_h}$), $\vec\beta_{\mathrm{err}} = \sum_{j=1}^p \norm{\hat{\vec\beta}_j - \vec\beta_j}_2$ (nodewise estimation error), and $\mat\Omega_{\mathrm{err}} = \sum_{i=1}^n \norm{\hat{\mat\Omega}^{(i)} - \mat\Omega^{(i)}}^2_{F,\mathrm{off}} / n$ (precision matrix estimation error over off-diagonal entries).

\subsection{Results}
Table~\ref{tbl:sim_orig} shows the results when $\delta = 0$ so that the true mean is $\vec\mu(\vec u) = \mat\Gamma \vec u$.
Comparing against \texttt{RegGMM-oracle}, we can see the effect of estimation error of $\hvec\Gamma$ for the two-stage methods. Despite the misspecification, \texttt{cspine} maintains competitive performance owing to the jointly convex formulation.
Table~\ref{tbl:sim} presents results under $\delta = 1$ so that $\vec\mu(\vec u) = \mat\Sigma(\vec u)\mat\Gamma\vec u$. As expected, \texttt{cspine} outperforms competitors as it correctly specifies the network effect of covariates on the mean. The stage 1 misspecification of \texttt{RegGMM} and \texttt{mt-RegGMM} leads to errors in constructing the centered responses and these errors propagate to errors in estimating the precision matrix.

\begin{table}[tbp]
\centering
\small
\begin{tabular}{r|r|r|cccc}
  \hline
$n$ & $q$ & Method & $\text{TPR}$ & $\text{FPR}$ & $\vec\beta_\text{err}$ & $\mat\Omega_\text{err}$ \\
  \hline
 & \multirow{4}{*}{$50$} & \texttt{cspine} & $0.747$ $(0.088)$ & $\mathbf{0.009}$ $(0.002)$ & $\mathbf{1.665}$ $(0.122)$ & $\mathbf{0.058}$ $(0.003)$ \\
 &  & \texttt{RegGMM} & $\mathbf{0.748}$ $(0.082)$ & $\mathbf{0.009}$ $(0.002)$ & $1.676$ $(0.120)$ & $\mathbf{0.058}$ $(0.003)$ \\
 &  & \texttt{mtRegGMM} & $0.684$ $(0.081)$ & $0.015$ $(0.002)$ & $1.877$ $(0.118)$ & $0.061$ $(0.003)$ \\
  \cline{3-7}
 &  & \texttt{RegGMM-oracle} & $0.774$ $(0.080)$ & $0.009$ $(0.002)$ & $1.595$ $(0.130)$ & $0.055$ $(0.003)$ \\
  \cline{2-7}
200 & \multirow{4}{*}{$100$} & \texttt{cspine} & $\mathbf{0.487}$ $(0.083)$ & $\mathbf{0.004}$ $(0.001)$ & $\mathbf{1.656}$ $(0.080)$ & $\mathbf{0.059}$ $(0.003)$ \\
 &  & \texttt{RegGMM} & $0.477$ $(0.085)$ & $\mathbf{0.004}$ $(0.001)$ & $1.660$ $(0.081)$ & $\mathbf{0.059}$ $(0.003)$ \\
 &  & \texttt{mtRegGMM} & $0.466$ $(0.082)$ & $0.010$ $(0.001)$ & $1.807$ $(0.091)$ & $0.060$ $(0.002)$ \\
  \cline{3-7}
 &  & \texttt{RegGMM-oracle} & $0.548$ $(0.089)$ & $0.004$ $(0.001)$ & $1.543$ $(0.094)$ & $0.055$ $(0.003)$ \\
  \hline
 & \multirow{4}{*}{$50$} & \texttt{cspine} & $0.916$ $(0.045)$ & $0.013$ $(0.002)$ & $1.315$ $(0.105)$ & $0.044$ $(0.002)$ \\
 &  & \texttt{RegGMM} & $\mathbf{0.923}$ $(0.044)$ & $\mathbf{0.011}$ $(0.002)$ & $\mathbf{1.273}$ $(0.109)$ & $\mathbf{0.043}$ $(0.002)$ \\
 &  & \texttt{mtRegGMM} & $0.882$ $(0.048)$ & $0.016$ $(0.002)$ & $1.473$ $(0.106)$ & $0.046$ $(0.002)$ \\
  \cline{3-7}
 &  & \texttt{RegGMM-oracle} & $0.927$ $(0.039)$ & $0.011$ $(0.002)$ & $1.219$ $(0.107)$ & $0.041$ $(0.002)$ \\
  \cline{2-7}
400 & \multirow{4}{*}{$100$} & \texttt{cspine} & $0.717$ $(0.073)$ & $0.008$ $(0.001)$ & $1.395$ $(0.094)$ & $0.048$ $(0.002)$ \\
 &  & \texttt{RegGMM} & $\mathbf{0.728}$ $(0.074)$ & $\mathbf{0.006}$ $(0.001)$ & $\mathbf{1.333}$ $(0.104)$ & $\mathbf{0.046}$ $(0.003)$ \\
 &  & \texttt{mtRegGMM} & $0.671$ $(0.070)$ & $0.012$ $(0.001)$ & $1.503$ $(0.091)$ & $0.049$ $(0.002)$ \\
  \cline{3-7}
 &  & \texttt{RegGMM-oracle} & $0.771$ $(0.069)$ & $0.005$ $(0.001)$ & $1.234$ $(0.107)$ & $0.042$ $(0.003)$ \\
  \hline
\end{tabular}
\caption{Mean and standard error of performance metrics over $100$ replications with $p=25$ responses, $q$ covariates, $n$ observations, and covariate-adjusted network structure $\mat\Omega(\vec u) = \mat B_0 + \sum_{h=1}^q {\mat B_h u_h}$ and mean $\vec\mu(\vec u) = \mat\Gamma \vec u$.}
\label{tbl:sim_orig}
\end{table}

\begin{table}[tbp]
\centering
\small
\begin{tabular}{r|r|r|cccc}
  \hline
$n$ & $q$ & Method & $\text{TPR}$ & $\text{FPR}$ & $\vec\beta_\text{err}$ & $\mat\Omega_\text{err}$ \\
  \hline
 & \multirow{3}{*}{$50$} & \texttt{cspine} & $\mathbf{0.839}$ $(0.068)$ & $\mathbf{0.007}$ $(0.002)$ & $\mathbf{1.432}$ $(0.135)$ & $\mathbf{0.049}$ $(0.003)$ \\
 &  & \texttt{RegGMM} & $0.797$ $(0.076)$ & $0.009$ $(0.002)$ & $1.552$ $(0.133)$ & $0.053$ $(0.003)$ \\
 &  & \texttt{mtRegGMM} & $0.742$ $(0.083)$ & $0.014$ $(0.002)$ & $1.759$ $(0.134)$ & $0.057$ $(0.003)$ \\
  \cline{2-7}
200 & \multirow{3}{*}{$100$} & \texttt{cspine} & $\mathbf{0.679}$ $(0.081)$ & $\mathbf{0.003}$ $(0.001)$ & $\mathbf{1.317}$ $(0.118)$ & $\mathbf{0.047}$ $(0.003)$ \\
 &  & \texttt{RegGMM} & $0.583$ $(0.089)$ & $0.004$ $(0.001)$ & $1.497$ $(0.119)$ & $0.053$ $(0.003)$ \\
 &  & \texttt{mtRegGMM} & $0.561$ $(0.088)$ & $0.009$ $(0.001)$ & $1.643$ $(0.122)$ & $0.054$ $(0.003)$ \\
  \hline
 & \multirow{3}{*}{$50$} & \texttt{cspine} & $\mathbf{0.953}$ $(0.032)$ & $\mathbf{0.009}$ $(0.002)$ & $\mathbf{1.066}$ $(0.115)$ & $\mathbf{0.035}$ $(0.003)$ \\
 &  & \texttt{RegGMM} & $0.938$ $(0.039)$ & $0.010$ $(0.002)$ & $1.178$ $(0.116)$ & $0.039$ $(0.003)$ \\
 &  & \texttt{mtRegGMM} & $0.909$ $(0.048)$ & $0.014$ $(0.003)$ & $1.371$ $(0.128)$ & $0.043$ $(0.003)$ \\
  \cline{2-7}
400 & \multirow{3}{*}{$100$} & \texttt{cspine} & $\mathbf{0.852}$ $(0.053)$ & $\mathbf{0.005}$ $(0.001)$ & $\mathbf{1.023}$ $(0.096)$ & $\mathbf{0.035}$ $(0.002)$ \\
 &  & \texttt{RegGMM} & $0.783$ $(0.067)$ & $\mathbf{0.005}$ $(0.001)$ & $1.183$ $(0.112)$ & $0.041$ $(0.003)$ \\
 &  & \texttt{mtRegGMM} & $0.705$ $(0.064)$ & $0.011$ $(0.002)$ & $1.360$ $(0.104)$ & $0.044$ $(0.003)$ \\
  \hline
\end{tabular}
\caption{Mean and standard error of performance metrics over $100$ replications with $p=25$ responses, $q$ covariates, $n$ observations, and covariate-adjusted network structure $\mat\Omega(\vec u) = \mat B_0 + \sum_{h=1}^q {\mat B_h u_h}$ and covariance-mediated mean $\vec\mu(\vec u) = \mat\Sigma(\vec u)\mat\Gamma \vec u$.}
\label{tbl:sim}
\end{table}

Our next study compares the methods across $100$ replications of the intermediate settings $\delta = 0.25, 0.5, 0.75$ in \eqref{eq:delta}, so that the influence of $\vec u$ on $\vec\mu(\vec u)$ is a mix of direct and indirect (network dependent) effects. The resulting $\vec\beta_{\mathrm{err}}$ are plotted in Figure~\ref{fig:vary-delta}.
We see that \texttt{cspine} performs similarly to \texttt{RegGMM} when $\delta = 0$ (so \texttt{RegGMM} is correctly specified), yet the gap between the two methods increases favorably for \texttt{cspine} as $\delta \to 1$. The gap shows that in the intermediate settings where all methods are misspecified, the jointly convex formulation outperforms the two-stage approach where the performance is more sensitive to estimation error in the first stage.

\begin{figure}[tbp]
\centering
\includegraphics[width = 0.7\linewidth]{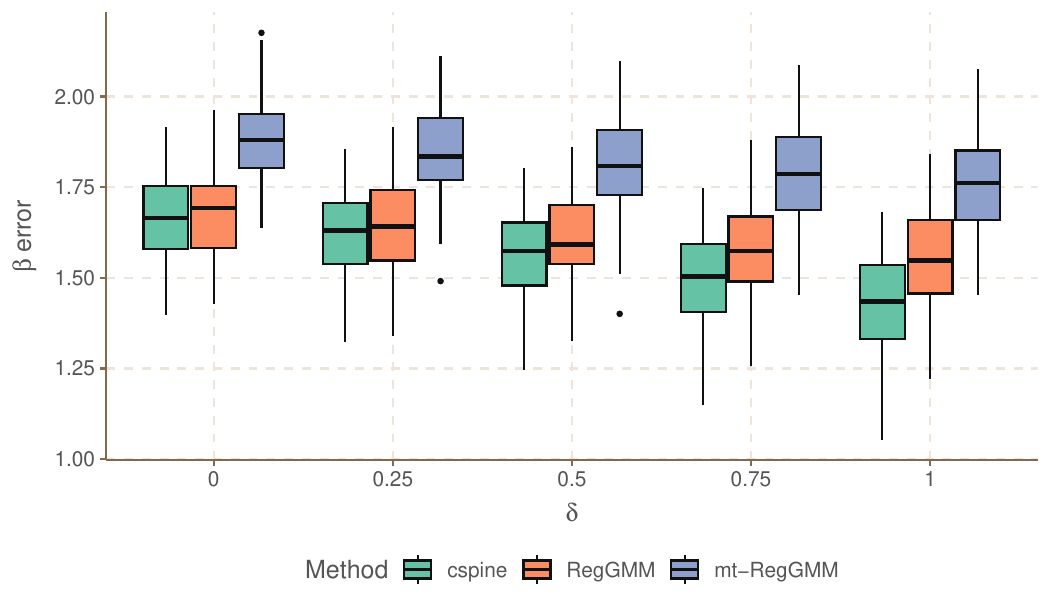}
\caption{Estimation error of $\sets{\mat B_h}$ for $p = 25$, $q=50$, $n=200$ and $\delta = 0, 0.25, 0.5, 0.75, 1$ in \eqref{eq:delta} over $100$ replications for each setting.  \texttt{RegGMM} and \texttt{mt-RegGMM} are correctly specified when $\delta = 0$ while \texttt{cspine} is correctly specified when $\delta = 1$. All methods are misspecified for intermediate values of $\delta$. As network effects on the mean dominate, the performance of \texttt{cspine} improves over that of two-stage methods.}
\label{fig:vary-delta}
\end{figure}

Finally we investigate the performance of \texttt{cspine} and \texttt{RegGMM} as the signal-to-noise ratio varies in the setting of \eqref{eq:zhanglispec}.
We define the signal-to-noise ratio as $\snr(\vec u) = \EE(\norm{\mat\Gamma \vec u}_2^2) / \EE[\tr\sets{\mat\Sigma(\vec u)}]$. To vary the $\snr$, before generating the responses we sample $q/2$ binary and continuous covariates as in Section~\ref{sec:data_gen} and then multiply the covariates by a constant $\varsigma$ such that the average $\snr(\varsigma \cdot \vec u)$ over $n$ samples is a target quantity. The responses are then drawn as before with $\delta=0$ in \eqref{eq:delta}. Figure~\ref{fig:vary-snr} shows the results with $p=25$, $q=50$, $n=200$, and target $\snr = 0.01, 0.025, 0.05, 0.075, 0.1$, with $50$ replications for each setting.
The results show that \texttt{cspine} has better performance than \texttt{RegGMM} in low $\snr$ settings, with performance equalizing around $\snr = 0.05$, and tilting in favor of \texttt{RegGMM} for larger $\snr$.
An explanation is that a low $\snr$ impacts \texttt{RegGMM} in both stages, resulting in noisier stage 1 residuals that compound the noisy graphical regression in stage 2.
As $\snr$ increases, $\hvec\Gamma$ becomes accurate enough that the contamination in stage 2 is negligible, and the advantage of correct specification under \eqref{eq:zhanglispec} allows \texttt{RegGMM} to overtake \texttt{cspine}.

\begin{figure}[tbp]
\centering
\includegraphics[width = 0.7\linewidth]{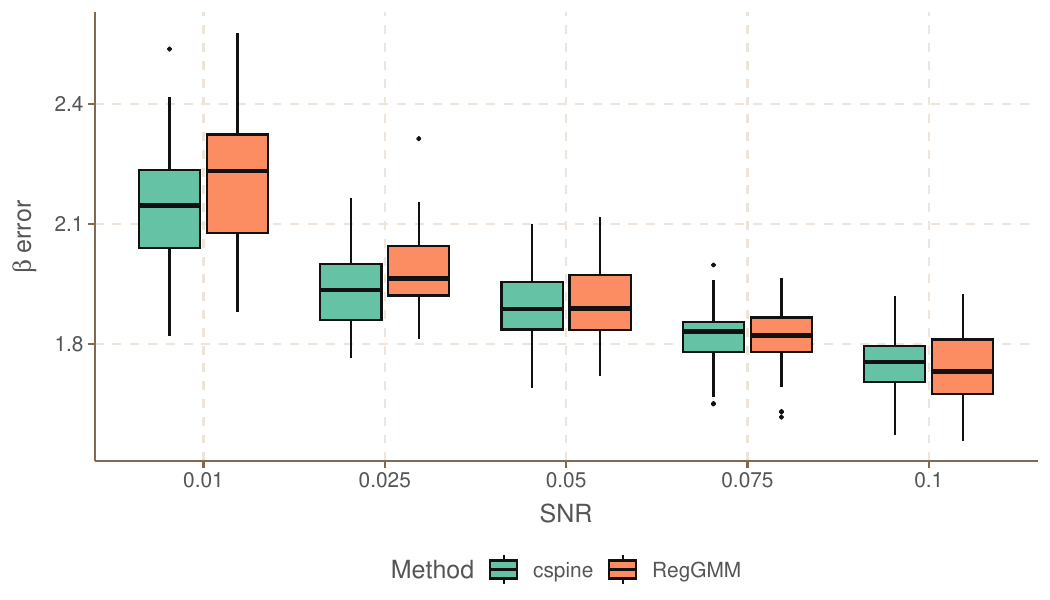}
\caption{Estimation error of $\sets{\mat B_h}$ for $p = 25$, $q=50$, $n=200$ and varying average $\snr$ over $50$ replications for each setting. The mean is given by $\delta=0$ in \eqref{eq:delta} so that \texttt{RegGMM} is correctly specified. \texttt{cspine} has lower estimation error than \texttt{RegGMM} in low $\snr$ regimes despite being misspecified.}
\label{fig:vary-snr}
\end{figure}

In Table~\ref{tbl:timing} we show the mean and standard deviation runtime for the nodewise regressions using Algorithm~\ref{alg:ssnal} for the $n=200$, $p=25$, and $q=50$ and $q = 100$ settings. The runtime is the wall clock time to compute $5$-fold cross-validation over $100$ penalty parameters for each method, averaged over $p$ responses, with each regression run on a 2.40GHz Intel Xeon processor. We see that the runtime of \texttt{cspine} is only marginally more than that of \texttt{RegGMM} since its nodewise regressions contain just $q$ more parameters.

\begin{table}[tbp]
\centering
\begin{tabular}{c|cc}
\hline
$q$ & \texttt{RegGMM} & \texttt{cspine} \\
\hline
$50$ & $4.03$ ($0.32$)     &  $4.06$ $(0.35)$ \\
$100$ & $7.62$ ($3.64$)    &  $7.96$ $(3.75)$ \\
\hline
\end{tabular}
\caption{\label{tbl:timing}Mean (and standard deviation) runtime in seconds to compute each nodewise regression with $p=25$ and $n=200$ for \texttt{RegGMM} and \texttt{cspine}, including $5$-fold cross-validation over $100$ penalty parameters.}
\end{table}

\section{Data Application}
\label{sec:realdata}
We demonstrate our method on two real data sets. The first is a glioblastoma multiforme (GBM) eQTL study, where we reanalyze data from \citet{zhangli} to illustrate how \texttt{cspine} recovers SNP-modified gene co-expression networks. The second is a dietary survey study from the American Gut Project, where we examine how dietary factors affect the conditional dependence structure of gut microbiome abundances.

\subsection{GBM eQTL Analysis}
Glioblastoma multiforme (GBM) is the most malignant type of brain cancer and patient prognosis is typically very poor.
Although there has been research on the genetic signaling pathways involved in the proliferation of GBM, it remains largely incurable; see \citet{glio-hanif} for a survey.
It is important to understand the conditional independence structure of genes involved in GBM in order to discover new drug therapies \citep{glio-kwiakowska}.
Our estimated graphs describe the conditional independence of co-expressions in a gene network; hence, we refer to estimated networks and effects of SNPs on this network.

We reanalyze a GBM eQTL data set that was reported in \citet{zhangli}; \robin{we are grateful to Emma Zhang for making this data set available to us.}
The data set contains microarray and SNP profiling data of $n=401$ GBM patients from the REMBRANDT trial (GSE108476).
We use the expression levels of $p=73$ genes known to belong to the human glioma pathway according to the Kyoto Encyclopedia of Genes and Genomes (KEGG) database \citep{KEGG}.
We also consider $q=118$ SNPs that are local to these $73$ genes. The SNPs are binary-coded, with $0$ indicating homozygous major alleles at that locus and $1$ otherwise.

Figure~\ref{fig:pop} displays the estimated population network.
The highly connected nodes include \textit{MTOR}, a critically dysregulated signaling hub and \textit{MDM2}, a highly mutated oncogene \citep{zhao_regulation_2014}. Known GBM signaling pathways are present in Figure~\ref{fig:pop} such as connections between the genes in the \textit{PDGRF/PI3K/AKT/mTOR} pathway \citep{network}.

\begin{figure}[tbp]
\centering
\includegraphics[width = 0.7\linewidth]{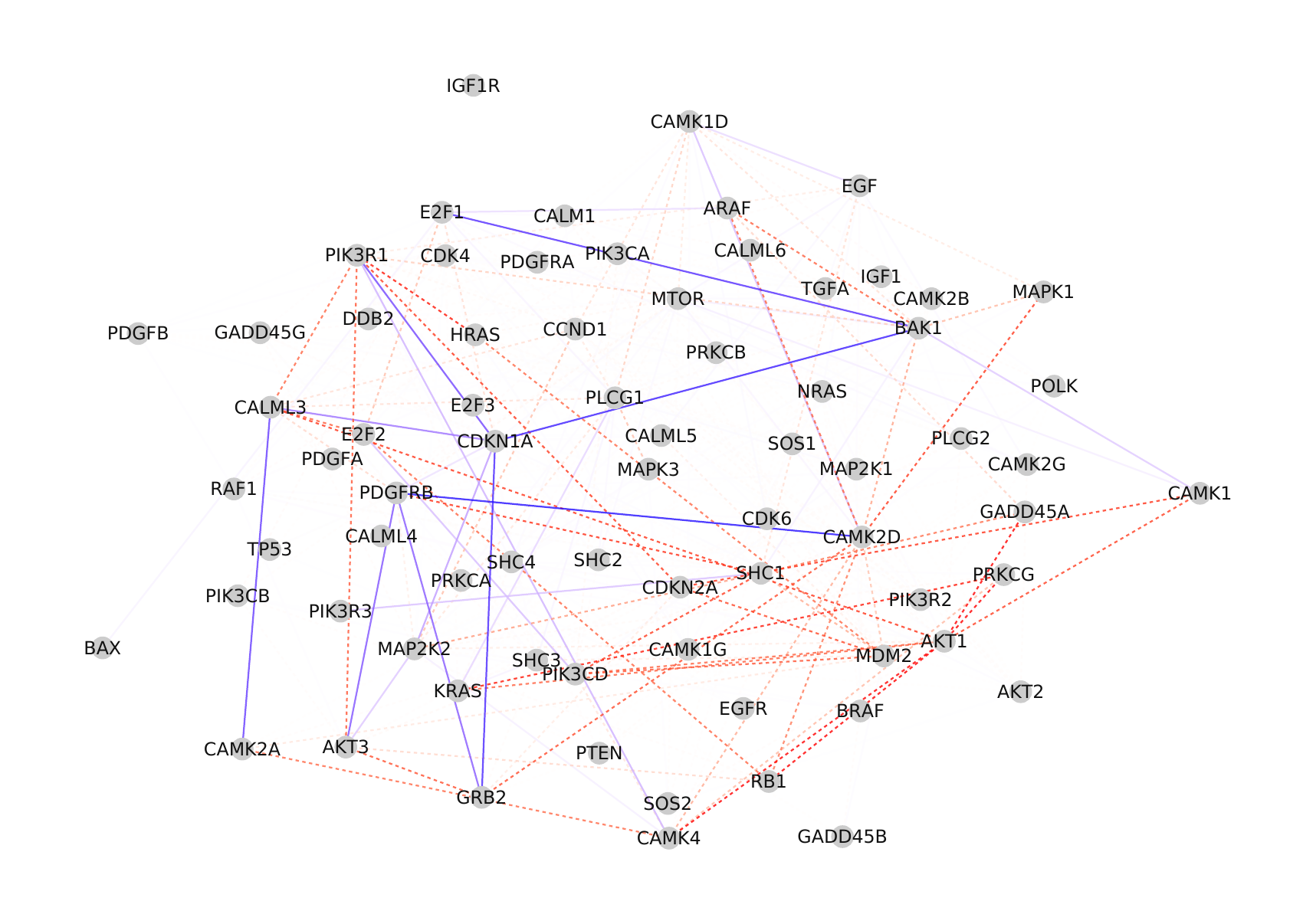}
\caption{Population network from GBM eQTL data estimated by \texttt{cspine}. The graph structure is determined by the estimates of $\mat B_0$ in \eqref{eq:cspinespec}. Solid blue lines and dashed red lines indicate positive and negative edge weights, respectively.}
\label{fig:pop}
\end{figure}

Our method identifies $9$ SNPs that potentially modify expression patterns in the network. We highlight two representative examples of these SNPs and their estimated effects on the network.
The first is \texttt{rs1267622}, a variant local to the \textit{BRAF} gene, whose estimated effects are shown in Figure~\ref{fig:remb_cov} (\textit{left}). It may mediate the co-expressions of the pairs \textit{SHC3}--\textit{PIK3R2} and \textit{CAMK2D}--\textit{GADD45A}.
This is a plausible finding because \textit{BRAF} is a key component of the \textit{Ras/Raf/MEK/ERK} signaling pathway, and \textit{SHC3} is an adaptor protein that sits upstream of both this pathway and the \textit{PI3K/AKT} pathway, linking receptor activation to \textit{PI3K} regulatory subunits such as \textit{PIK3R2} \citep{network}. A variant near \textit{BRAF} may therefore shift the coordination between these two branches. Similarly, \textit{CAMK2D} and \textit{GADD45A} are both involved in cellular stress response; \textit{BRAF} variants that reroute signaling toward stress kinase cascades may affect the co-expression of these two genes.
Meanwhile, \texttt{rs759950} (Figure~\ref{fig:remb_cov} \textit{right}), located near \textit{CAMK4}, may mediate the co-expressions of \textit{PIK3CA} and \textit{AKT1}, key components of the \textit{PI3K/AKT} pathway, mutations of which are known to be oncogenic, with targeted inhibition being an active area of cancer research \citep{samuels, batsios_pi3kmtor_2019}.
Calcium signaling via \textit{CAMK4} may play a role in this important pathway.

Our method additionally estimates that \texttt{rs6939054} may affect \textit{CAMK1--GADD45A} and \textit{CAMK4--GADD45A}, \texttt{rs10519200} may affect \textit{AKT1--RB1}, \texttt{rs1347069} may affect \textit{CAMK1--AKT1}, \texttt{rs10518690} may affect \textit{CAMK1--CAMK4}, \texttt{rs10491204} may affect \textit{AKT1--AKT2}, \texttt{rs9303504} may affect \textit{AKT1--CAMK1G}, \texttt{rs10515427} may affect \textit{AKT1--BRAF}, and \texttt{rs2877260} may affect \textit{GADD45A--RB1}.

\begin{figure}[tbp]
\centering
\includegraphics[width = \linewidth]{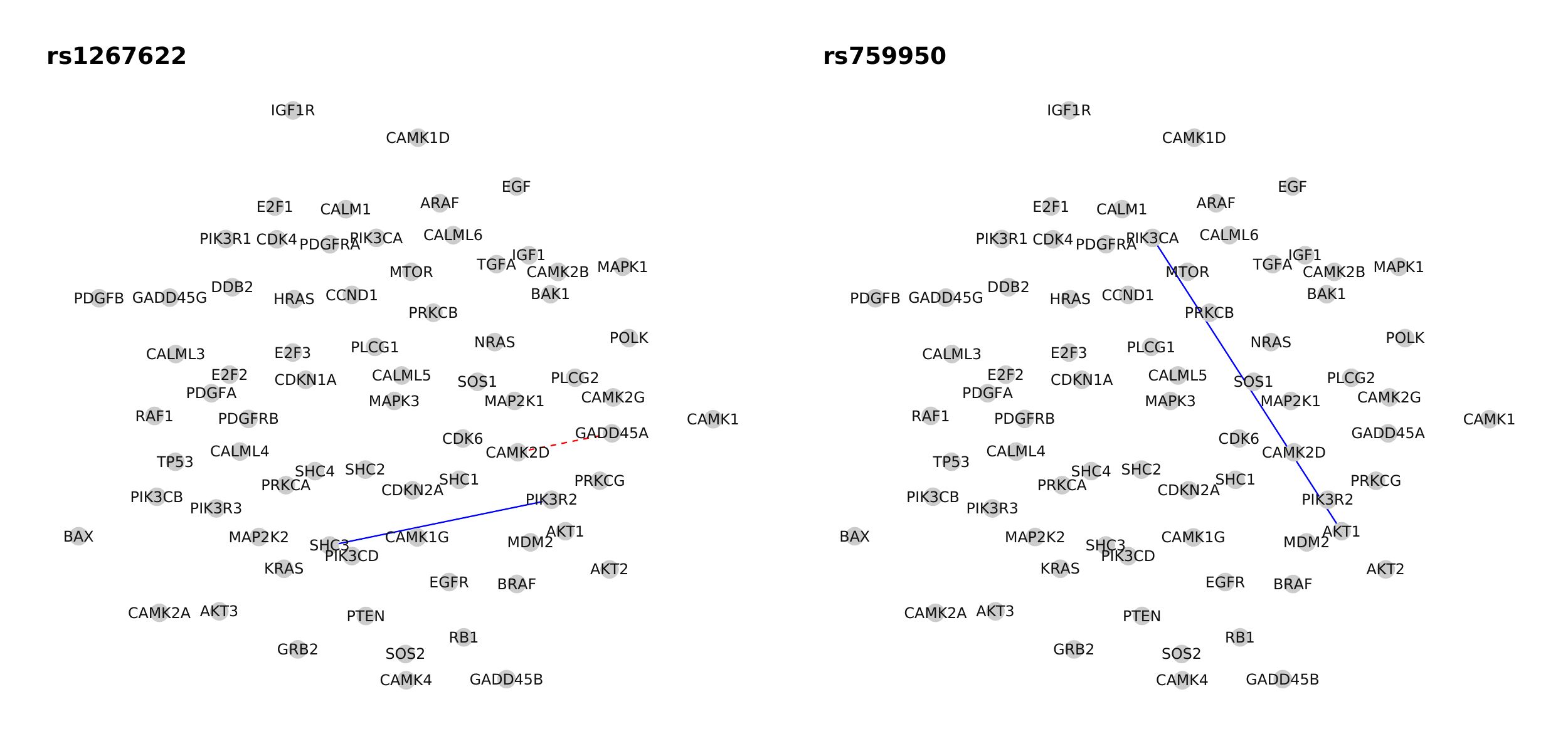}
\caption{Estimated covariate networks for two SNPs \texttt{rs1267622} (\textit{left}) and \texttt{rs759950} (\textit{right}).
The graph structure corresponds to the sparsity pattern of the matrix $\mat B_h$ corresponding to the SNPs in \eqref{eq:cspinespec}.
Solid blue lines and dashed red lines indicate positive and negative effects of this SNP on the partial correlation of co-expressions.}
\label{fig:remb_cov}
\end{figure}

\subsection{AGP Microbiome Analysis}
The human gastrointestinal tract is an extraordinarily dense and dynamic microbial ecosystem, hosting trillions of microorganisms that have co-evolved with the host to perform essential physiological, metabolic, and immunological functions.
These microorganisms are intimately linked by cross-feeding, where metabolites produced by one microbe are consumed by another, resulting in interdependent microbial communities; see \citet{culp_cross-feeding_2023} for a review.
Moreover, recent research suggests that diet plays a major role in the composition of this ecosystem; see for example \citet{conlon_impact_2014,klimenko_hallmarks_2022,snodgrass_butyrate-producing_2026} among many others.
It is of clinical and biological interest to understand how dietary factors may affect the complex interrelationships of human gut bacteria.

We analyze data from the American Gut Project (AGP), a large, crowd-funded study of the human microbiome \citep{mcdonald_american_2018}. AGP participants submit samples and fill out detailed lifestyle and diet surveys. 16S rRNA sequencing is applied to the samples, operational taxonomic units (OTUs) are identified, and the OTU abundances are recorded.
Estimating the inverse covariance of OTU abundances has been applied to AGP data before in \citet{kurtz_sparse_2015}. Here we provide a brief analysis of the effect of diet on the OTU abundance network using subject-level dietary survey responses as covariates.
We selected $p=20$ OTUs to analyze, representing the top $20$ most abundant OTUs in our data.
The $q=11$ covariates are binarized responses indicating the frequency of consuming various types of grains, seafood, fruits, vegetables, meats, sugar, and alcohol, with responses ``rarely'' and ``never'' mapped to $0$ and all other responses to $1$.
Subsetting to ``omnivores'' only, our final data set consisted of $n=367$ subjects with no missing data.
For each subject we followed the standard procedure of converting raw OTU counts into relative abundances followed by a center log ratio transform \citep{kurtz_sparse_2015}.

Figure~\ref{fig:agp_pop} shows the estimated population network from our proposed method. The nodes represent OTUs and are labeled with an ID along with its associated taxonomical family.
The OTU~4478125 is highly connected in this graph. It corresponds to the species \textit{Faecalibacterium prausnitzii}, which is one of the most abundant bacteria in the human microbiome \citep{sabater_mgem_2025}.
We find that it is highly connected to OTUs in the genus \textit{Bacteroides}.
This is an interesting finding as \textit{F. prausnitzii} has been shown to stabilize the gut environment and benefit the physiology of the host \citep{snodgrass_butyrate-producing_2026}. The cross-feeding relationships between \textit{F. prausnitzii} and members of different genera, including \textit{Bacteroides}, have also been observed before \citep{rios-covian_enhanced_2015, wrzosek_bacteroides_2013}.
Another interesting observation is the high connectivity of OTU~4365130, identified with the species \textit{Parabacteroides distasonis}, which recent research suggests to play a key role in human health \citep{ezeji_parabacteroides_nodate}.

\begin{figure}[tbp]
\centering
\includegraphics[width = 0.6\linewidth]{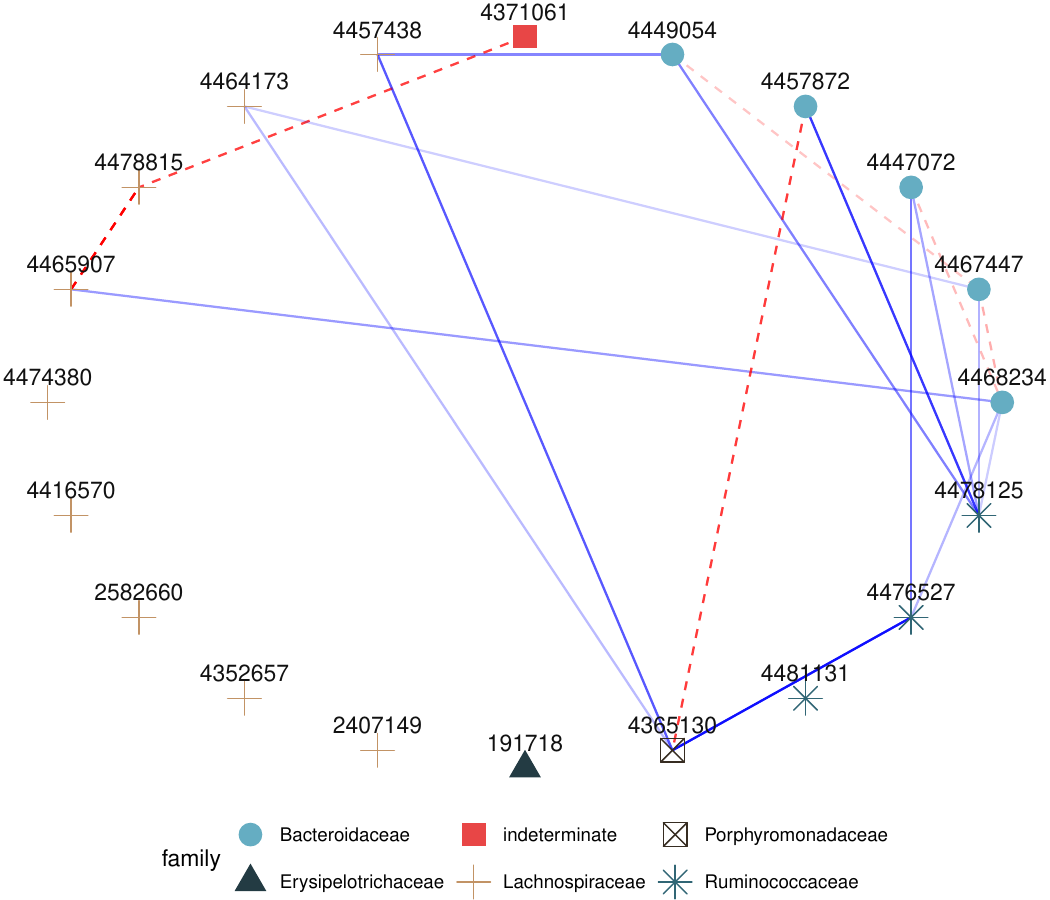}
\caption{Estimated population network for relative OTU abundances from AGP data. Solid blue lines and dashed red lines indicate positive and negative edge weights, respectively.}
\label{fig:agp_pop}
\end{figure}

Of the $11$ dietary covariates, we identified two to have non-zero effects: red meat consumption and alcohol. Figure~\ref{fig:agp_cov} shows the estimated effects. Red meat consumption may affect the relative abundance of \textit{Bacteroides} OTU~4457872 with OTUs 4478125 and 4476527, both identified with \textit{F. prausnitzii}, as well as a member of the \textit{Lachnospiraceae} family.
Alcohol consumption may affect the relationship between \textit{F. prausnitzii} and a member of \textit{Lachnospiraceae}.

\begin{figure}[tbp]
\centering
\includegraphics[width = \linewidth]{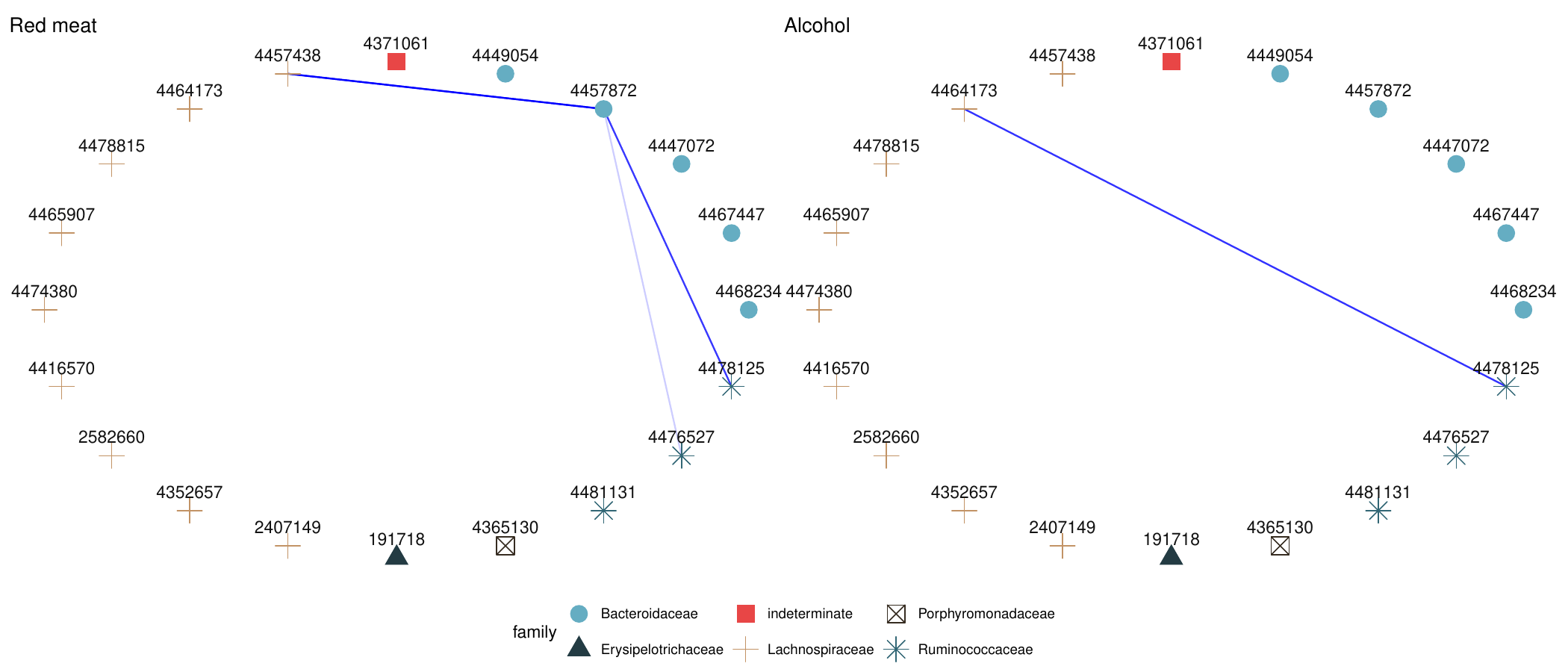}
\caption{Covariate network effects of red meat (\textit{left}) and alcohol (\textit{right}) on relative OTU abundances estimated from AGP data.}
\label{fig:agp_cov}
\end{figure}

\section{Discussion}
\label{sec:discussion}
In this work, we propose a new framework for covariate-adjusted Gaussian graphical models based on a natural parametrization that yields a jointly convex formulation.
This approach enables unified estimation of both the covariate-dependent mean and covariate-dependent precision matrix, in contrast to the two-stage procedure of \citet{zhangli}.
\robin{Furthermore, the natural parametrization is more than a computational convenience. The resulting mean structure $\vec\mu(\vec u) = \mat\Sigma(\vec u)\,\diag(\mat\Omega)\,\mat\Gamma\vec u$ describes an interpretable mechanism in which covariates influence the mean response through broader network effects.}
Our theoretical analysis shows that the convex formulation permits more relaxed scaling conditions on the sparsity level of the true parameter relative to the sample size $n$. These advantages are supported by our simulation results, which demonstrate improved performance in regimes where stagewise methods are challenged by error propagation.

The proposed framework is flexible and does not impose restrictions on the form of the penalty functions $g_j$ in \eqref{eq:cspineobjgeneral}, allowing for a wide range of sparsity structures. Incorporating more structured penalties is a promising direction for future work. For instance, one may consider hierarchical penalties that reflect the relationship between covariate effects on the mean and precision matrix, where a covariate influencing the precision matrix is expected to also affect the mean, but not necessarily vice versa.
Another direction is to develop tuning-free variants of the proposed method, for example by adapting ideas from the square-root lasso \citep{sqrtlasso}.
Additionally, methods for statistical inference could be developed by applying debiasing techniques to each nodewise regression \citep{meng_statistical_2025}. From a theoretical perspective, it would be of interest to establish minimax optimality of the estimator in \eqref{eq:cspineobj}. While our results achieve the same rate as in \citet{zhangli}, the optimality of this rate remains an open question.

\section*{Supporting Material}
The code and data supporting Section~\ref{sec:simulation} are available at \url{https://github.com/roobnloo/cspine-sim}.
The data that support the findings in Section~\ref{sec:realdata} are available from the corresponding author upon reasonable request.
The authors report there are no competing interests to declare.

\appendix

\section{Proofs of Theoretical Results} \label{appdx:proofs}
Our proof strategies for Theorems~\ref{thm:l2error} and \ref{thm:supportrecovery} follow that of \citet{zhangli}, with modifications made to accommodate our concatenated design matrix $\mUW$.
We will first recall several lemmas from the literature and then prove some supporting lemmas before moving onto the main proofs.

\subsection{Technical Lemmas}

\begin{lemma}[\citealt{bellec} Lemma 1]
\label{lem:1}
Let $g: \R^d \to \R$ be any convex function and let
\[
\hvec\beta\in \argmin_{\vec\beta\in\R^d}\sets*{\norm{\vec y - \mat H \vec\beta}_2^2 + g(\vec\beta)}
\]
where $\mat H \in \R^{n \times d}$ and $\vec y \in \R^n$. Then for all $\vec\beta\in \R^d$,
\[
\frac{1}{2n}\norm{\vec y - \mat H \hvec\beta}_2^2 + g(\hvec\beta) + \frac{1}{2n}\norm{\mat H(\hvec\beta - \vec\beta)}_2^2 \leq \frac{1}{2n}\norm{\vec y - \mat H \vec\beta}_2^2 + g(\vec\beta).
\]
\end{lemma}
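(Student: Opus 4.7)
The plan is to prove this by combining a quadratic Pythagoras-type identity with the first-order optimality condition at $\hvec\beta$ and the convexity of $g$. The identity comes from writing $\vec y - \mat H \vec\beta = (\vec y - \mat H\hvec\beta) + \mat H(\hvec\beta - \vec\beta)$ and expanding the squared norm to obtain
\[
\norm{\vec y - \mat H \vec\beta}_2^2 = \norm{\vec y - \mat H \hvec\beta}_2^2 + \norm{\mat H(\hvec\beta - \vec\beta)}_2^2 + 2(\hvec\beta - \vec\beta)^\T \mat H^\T(\vec y - \mat H\hvec\beta).
\]
Dividing by $2n$ and substituting into the desired inequality, everything collapses to showing the single bound
\[
g(\hvec\beta) - g(\vec\beta) \leq \frac{1}{n}(\hvec\beta - \vec\beta)^\T \mat H^\T(\vec y - \mat H\hvec\beta).
\]

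Next I would invoke the first-order optimality condition at $\hvec\beta$ for the convex objective $F(\vec\beta) = \frac{1}{2n}\norm{\vec y - \mat H\vec\beta}_2^2 + g(\vec\beta)$ (interpreting the argmin in the statement with the same $1/(2n)$ scaling used in the conclusion, as the objective in the excerpt appears to omit this factor). Since the quadratic part is smooth with gradient $-\frac{1}{n}\mat H^\T(\vec y - \mat H\hvec\beta)$, the subdifferential sum rule yields a subgradient $\vec v \in \partial g(\hvec\beta)$ with $\vec v = \frac{1}{n}\mat H^\T(\vec y - \mat H\hvec\beta)$. Convexity of $g$ then gives $g(\vec\beta) \geq g(\hvec\beta) + \langle \vec v, \vec\beta - \hvec\beta\rangle$, which rearranges into exactly the required inequality.

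Since this is a one-shot combination of a quadratic identity, a subgradient characterization of optimality, and convexity, there is no substantive obstacle. The only subtle point is that $g$ is not assumed differentiable, so I must handle the optimality condition through subdifferentials rather than gradients; this is resolved by the standard sum rule $\partial(F+g)(\hvec\beta) = \nabla F(\hvec\beta) + \partial g(\hvec\beta)$ valid here because the quadratic term is everywhere smooth. The argument requires no probabilistic or high-dimensional machinery at all, which is why it can serve as a deterministic workhorse for the proofs of Theorems~\ref{thm:l2error} and \ref{thm:supportrecovery}.
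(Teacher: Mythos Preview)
Your argument is correct and is exactly the standard proof of this result: expand the quadratic, reduce to a subgradient inequality, and invoke the first-order optimality condition at $\hvec\beta$ together with convexity of $g$. The paper itself does not prove this lemma; it simply quotes it from \citet{bellec}, so there is no alternative approach in the paper to compare against.
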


\begin{lemma}[\citealt{graybill} Theorem F]
\label{lem:2}
Let $\vec \ep \sim N_p(\vec 0, \sigma^2 \mat I_p)$ and let $A$ be a $p\times p$ idempotent matrix with rank $r \leq p$.
Then $\vec \ep^\top A \vec \ep/\sigma^2 \sim \chi^2_r$.
\end{lemma}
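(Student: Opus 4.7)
The plan is to prove the distributional claim by diagonalizing the quadratic form through the spectral decomposition of $A$ and then exploiting rotational invariance of the standard normal distribution. Since $A$ is a projection-type matrix (idempotent and, as is standard for this result, assumed symmetric in the Graybill statement), its eigenvalues are all $0$ or $1$, and the number of unit eigenvalues equals the rank $r$.

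First I would write the spectral decomposition $A = Q D Q^\top$, where $Q$ is orthogonal and $D = \diag(1,\ldots,1,0,\ldots,0)$ with exactly $r$ ones. This uses the fact that a symmetric idempotent matrix has only $\{0,1\}$ eigenvalues: if $Av = \lambda v$ with $v\neq 0$, then $A^2 v = \lambda^2 v = A v = \lambda v$, forcing $\lambda\in\{0,1\}$; since the trace equals the sum of eigenvalues and also equals the rank for an idempotent matrix, exactly $r$ of those eigenvalues are $1$.

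Next I would perform the change of variables $\vec z = Q^\top \vec\ep / \sigma$. Because $Q$ is orthogonal, $\vec z \sim N_p(\vec 0, \mat I_p)$, so its components $z_1,\ldots,z_p$ are i.i.d.\ standard normal. Then a direct computation yields
\[
\frac{\vec\ep^\top A \vec\ep}{\sigma^2}
= \frac{\vec\ep^\top Q D Q^\top \vec\ep}{\sigma^2}
= \vec z^\top D \vec z
= \sum_{i=1}^{r} z_i^2,
\]
which is a sum of $r$ independent squared standard normals and hence $\chi^2_r$ by the definition of the chi-squared distribution.

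There is no real obstacle here; the only subtlety is justifying that $A$'s eigenvalues lie in $\{0,1\}$ with multiplicity $r$ for the eigenvalue $1$, which is routine. The orthogonal change of variables takes care of everything else, reducing the general quadratic form to a clean sum of independent squared standard normals.
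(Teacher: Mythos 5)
Your proof is correct. The paper does not prove this lemma at all --- it is imported verbatim from Graybill's text as a known result and applied only to the orthogonal projections $\cP_{\cI,\cJ}$ in Step~1 of the proof of Theorem~\ref{thm:l2error} --- so there is no in-paper argument to compare against; your spectral-decomposition proof is the standard one. The one point worth flagging, which you correctly identify, is that idempotence alone is not enough: the lemma as stated omits symmetry, and for a non-symmetric idempotent matrix such as $A = \left(\begin{smallmatrix}1 & 1\\ 0 & 0\end{smallmatrix}\right)$ the quadratic form $\vec\ep^\top A\vec\ep = \ep_1^2 + \ep_1\ep_2$ can be negative and is not $\chi^2_1$. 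With symmetry assumed (as in Graybill, and as holds for the orthogonal projections to which the paper applies the lemma), your argument --- eigenvalues in $\{0,1\}$ with the trace/rank identity giving multiplicity $r$ for the eigenvalue $1$, followed by the orthogonal change of variables $\vec z = Q^\top\vec\ep/\sigma$ --- is complete and airtight.
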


\begin{lemma}[\citealt{laurent} Lemma 1]
\label{lem:3}
Suppose that $U\sim \chi^2_r$. For any $x>0$ it holds that
\[\PP(U - r\geq 2\sqrt{rx} + 2x)\leq e^{-x}.\]
\end{lemma}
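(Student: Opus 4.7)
The plan is to use the Chernoff bound together with the explicit moment generating function of the chi-squared distribution. I would first write $U = \sum_{i=1}^r Z_i^2$ with $Z_i \sim N(0,1)$ i.i.d., which gives the MGF
\[
\EE[\exp(\lambda U)] = (1-2\lambda)^{-r/2}, \quad \lambda \in (0, 1/2).
\]
A standard Markov step then yields, for any such $\lambda$ and any $t > 0$,
\[
\PP(U \geq r + t) \leq \exp(-\lambda(r+t))(1-2\lambda)^{-r/2}.
\]

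Next I would optimize over $\lambda$. Differentiating the exponent $-\lambda(r+t) - (r/2)\log(1-2\lambda)$ and setting it to zero gives $\lambda^\star = t/(2(r+t))$, which automatically lies in $(0, 1/2)$. Plugging this back yields the clean intermediate inequality
\[
\PP(U \geq r + t) \leq \exp\!\left(-\frac{t}{2} + \frac{r}{2}\log\!\left(1+\frac{t}{r}\right)\right).
\]

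The final step is to specialize to $t = 2\sqrt{rx} + 2x$ and show the exponent above is $\leq -x$. Setting $u = \sqrt{x/r}$ so that $t/r = 2u + 2u^2$ and $t - 2x = 2\sqrt{rx} = 2ru$, the target reduces to verifying the elementary inequality $\log(1 + 2u + 2u^2) \leq 2u$ for $u \geq 0$, which follows from $1 + 2u + 2u^2 \leq e^{2u}$ (compare Taylor expansions term by term).

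The only subtlety is the elementary scalar inequality at the end; everything upstream is a mechanical Chernoff/MGF calculation. An alternative route would be to recognize $U - r$ as a sum of i.i.d.\ centered sub-exponential variables and apply a Bernstein-type bound, but the direct MGF optimization here has the advantage of producing exactly the constants $2\sqrt{rx} + 2x$ without any loss.
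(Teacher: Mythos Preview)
Your argument is correct and is precisely the standard Chernoff--MGF optimization that underlies the original Laurent--Massart proof. Note, however, that the paper itself does not give a proof of this lemma; it simply cites \citet{laurent}, Lemma~1, and uses the result as a black box. So there is no ``paper's own proof'' to compare against---you have supplied one where the paper deferred to the literature, and your derivation matches the classical route.
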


\begin{lemma}[\citealt{vershynin} Proposition 5.16]
\label{lem:4}
Let $X_1,\dotsc, X_n$ be independent, mean zero sub-exponential random variables. Let $v_1 = \max_i \norm{X_i}_{\psi_1}$
where $\norm{\cdot}_{\psi_1}$ is the sub-exponential norm. Then there exists a constant $c$ such that for any $t>0$ we have
\[
\PP\left(\abs*{\sum_{i=1}^n X_i} \geq t\right) \leq 2\exp\sets*{-c \min\parens*{\frac{t^2}{v_1^2n}, \frac{t}{v_1}}}.
\]
\end{lemma}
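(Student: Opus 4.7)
The plan is the classical Chernoff/Cram\'er argument combined with a moment generating function (MGF) estimate tailored to the sub-exponential class, followed by a two-regime optimization in the Chernoff exponent.

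First, I would establish the single-variable MGF bound: for any mean-zero sub-exponential $X$ with $\norm{X}_{\psi_1} \leq K$, there exist absolute constants $c_1, c_2 > 0$ such that
\[
\EE[e^{\lambda X}] \leq \exp(c_1 \lambda^2 K^2) \quad\text{for}\quad |\lambda| \leq c_2/K.
\]
This comes from Taylor-expanding the MGF, using $\EE[X] = 0$ to drop the linear term, and controlling the remaining moments via $\EE[|X|^k] \leq 2\, k!\, K^k$, which is one of the equivalent characterizations of the $\psi_1$-norm. Summing the resulting series gives $\EE[e^{\lambda X}] \leq 1 + C(|\lambda|K)^2$ in the stated range, and then $1+x \leq e^x$ closes the step.

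Next, independence yields $\EE[\exp(\lambda \sum_i X_i)] \leq \exp(c_1 n v_1^2 \lambda^2)$ in the same range of $\lambda$. The Chernoff bound gives
\[
\PP\!\left(\sum_i X_i \geq t\right) \leq \exp\!\left(-\lambda t + c_1 n v_1^2 \lambda^2\right)
\]
for $0 < \lambda \leq c_2/v_1$. The unconstrained minimizer is $\lambda^\star = t/(2 c_1 n v_1^2)$. If $\lambda^\star \leq c_2/v_1$, plugging in produces the sub-Gaussian bound $\exp(-c_3 t^2/(n v_1^2))$; otherwise the constraint is binding, so I set $\lambda = c_2/v_1$ and the linear term dominates, yielding the sub-exponential bound $\exp(-c_4 t/v_1)$. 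The two regimes combine into the $\min$ form stated. Applying the same argument to $\{-X_i\}$ and union-bounding produces the two-sided inequality with factor $2$.

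The technical heart is the first step: translating the Orlicz-norm bound $\norm{X_i}_{\psi_1} \leq v_1$ into a local quadratic MGF estimate valid in a neighborhood of the origin. Once this is available, the remainder is a routine Chernoff optimization split into the sub-Gaussian (small $t$) and sub-exponential (large $t$) regimes, so this is where the only real obstacle lies.
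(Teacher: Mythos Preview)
Your argument is correct and is exactly the standard Chernoff-plus-two-regime proof of Bernstein's inequality for sub-exponential variables (this is essentially how Vershynin proves Proposition~5.16). Note, however, that the paper does not supply its own proof of this lemma at all: it is simply quoted as a known result from the cited reference, so there is no ``paper's proof'' to compare against --- your write-up fills in what the paper leaves to the literature.
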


Lemma~\ref{lem:kuchi} comes from Theorem~4.1 in \citet{kuchi} applied to marginally sub-exponential random vectors.
\begin{lemma}[\citealt{kuchi} Theorem 4.1]
\label{lem:kuchi}
Let $\bm Z_1,\dotsc, \bm Z_n$ be independent random vectors in $\R^p$ and let $\mat Z = [\bm Z_1, \dotsc, \bm Z_n]^\T$.
Assume each element of $\bm Z_i$ is sub-exponential with bounded sub-exponential norm for all $i\in [n]$;
$\norm{Z_{ij}}_{\psi_1} = \sup_{d\geq 1} d^{-1}(\EE\abs{Z_{ij}}^d)^{1/d} \leq K_z$ for all $i\in [n]$ and $j\in [p]$ for some constant $K_z > 0$.
Let $\hvec\Sigma_{\mat Z} = \mat Z^\top \mat Z/n$ and $\mat\Sigma_{\mat Z} = \EE[\mat Z^\top \mat Z/n]$.
Define
\[
\Upsilon_{n} = \max_{j,k} \frac{1}{n} \sum_{i=1}^n \Var(Z_{ij}Z_{ik}).
\]
Then we have
\[
\sup_{\norm{\vec v}_0 \leq k, \norm{\vec v}_2\leq 1}
\abs*{\vec v^\top (\hvec\Sigma_{\mat Z} - \mat\Sigma_{\mat Z}) \vec v} \precsim k \sqrt{\frac{\Upsilon_n \log p}{n}} + K_z^2\frac{k(\log n\log p)^2}{n}
\]
with probability at least $1 - O(1/p)$.
\end{lemma}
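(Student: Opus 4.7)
The plan is to obtain Lemma~\ref{lem:kuchi} as a direct application of Theorem~4.1 in \citet{kuchi}, with bookkeeping to verify that our marginal sub-Gaussian assumption supplies the inputs required by that theorem and to identify our variance proxy $\Upsilon_n$ with the quantity appearing in their bound. The lemma is really a convenient specialization of a known sparse operator-norm concentration result to the setting of coordinatewise sub-Gaussian observations, so the proposal here is mostly a translation step rather than a novel probabilistic argument.

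First, I would record the hypotheses of Kuchibhotla's Theorem~4.1. That theorem provides a two-regime deviation bound for the sparse quadratic form
$\sup_{\norm{\vec v}_0 \leq k,\,\norm{\vec v}_2 \leq 1} \abs{\vec v^\top (\hvec\Sigma_{\vec Z} - \mat\Sigma_{\vec Z}) \vec v}$
under suitable tail conditions on the coordinates of $\vec Z_i$. Its variance-type term is governed by $\Upsilon_n = \max_{j,k} n^{-1} \sum_i \Var(Z_{ij} Z_{ik})$, and its exponential-tail term is governed by an Orlicz-type norm on the $Z_{ij}$. Our assumption that each $Z_{ij}$ is sub-Gaussian with bounded sub-Gaussian norm supplies both inputs: it makes $Z_{ij} Z_{ik}$ sub-exponential with bounded norm (the linear-regime input) and keeps $\EE[(Z_{ij} Z_{ik})^2]$, hence $\Upsilon_n$, uniformly bounded.

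Second, I would match the specific form in the stated bound to Kuchibhotla's output. The $k$ factor on each term is the sparse operator norm scaling; the $\log p$ factor arises from a union bound over the $\binom{p}{k}$ supports combined with covering each $k$-dimensional sub-sphere by an $\varepsilon$-net; and the extra $\log n$ in the residual term is the usual cost of lifting an expectation bound to a high-probability bound in the sub-exponential regime. Choosing the free deviation parameter in the cited theorem to be of order $\log p$ yields the stated probability $1 - O(1/p)$.

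If instead one wished to prove the lemma from scratch using only the tools already collected in the excerpt, the main obstacle would be uniform control of the quadratic form $\vec v^\top (\hvec\Sigma_{\vec Z} - \mat\Sigma_{\vec Z}) \vec v = n^{-1} \sum_i \bracks{(\vec v^\top \vec Z_i)^2 - \EE(\vec v^\top \vec Z_i)^2}$ over sparse unit vectors. For fixed $\vec v$, the summands are centered and sub-exponential, so Lemma~\ref{lem:4} (Bernstein for sub-exponentials) gives a two-regime tail. The technical crux is then the discretization: one would build an $\varepsilon$-net on each $k$-dimensional sub-sphere and take a union bound over the $\binom{p}{k} \leq (ep/k)^k$ supports, reconciling the resulting $k \log(ep/k)$ metric entropy with the two regimes of the pointwise bound to recover the stated rate. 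Since Kuchibhotla's theorem already performs this reconciliation and optimizes the constants, our proof reduces to the verification and translation outlined above.
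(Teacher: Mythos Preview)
Your proposal is correct and matches the paper's treatment: the paper does not prove Lemma~\ref{lem:kuchi} independently but simply cites it as Theorem~4.1 of \citet{kuchi}, with the surrounding text verifying that the marginal sub-Gaussian hypothesis and boundedness of $\Upsilon_n$ hold in their specific setting. Your bookkeeping of how the sub-Gaussian assumption feeds the variance and tail inputs of that theorem is exactly the translation step the paper relies on.
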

With Assumptions \ref{assump:bounded_u_element} and \ref{assump:subgaussian_x}, we may apply Lemma~\ref{lem:kuchi} to our design matrix $[\mat U, \mat W]$ because
it is elementwise sub-exponential, each entry being the product of a sub-Gaussian and a bounded random variable.
Furthermore, we have by the Cauchy-Schwarz inequality that
\[
\max_{\ell, k} \frac{1}{n} \sum_{i=1}^n \Var\parens*{[\mat U, \mat W]_{i\ell}[\mat U, \mat W]_{ik}}
\leq \max_{\ell_1, \ell_2, \ell_3, \ell_4} \EE\parens*{ X_{\ell_1}^{(1)^2} X_{\ell_2}^{(1)^2} U_{\ell_3}^{(1)^2} U_{\ell_4}^{(1)^2}} = O(1)
\]
since the entries of $\vec X^{(1)}$ and $\vec U^{(1)}$ have bounded moments.
Thus $\Upsilon_n = O(1)$ in our setting.

\begin{lemma}[\citealt{loh} Lemma 12]
\label{lem:loh}
Let $\mat\Sigma \in \R^{p\times p}$ be a symmetric matrix such that $\abs{\vec v^\T \mat\Sigma \vec v} \leq \delta_1$ for all $\vec v \in \R^p$ with $\norm{\vec v}_2 = 1$ and $\norm{\vec v}_0 \leq 2s$.
It holds for all $\vec v\in\R^p$ that
\[
\abs{\vec v^\T \mat\Sigma \vec v} \leq 27\delta_1\parens*{\norm{\vec v}_2^2 + \frac{1}{s}\norm{\vec v}_1^2}.
\]
\end{lemma}

Treating $j$ as fixed and suppressing the dependence of $\mat W_{-j}$ on $j$, recall that the sample and population Gram matrices are defined as
\[\hSUW = \frac{1}{n}[\mU , \mW]^\T [\mU , \mW],\quad \SUW = \EE\parens*{\frac{1}{n}[\mU , \mW]^\T [\mU , \mW]}.\]

\begin{lemma}
\label{lem:operatorboundUW}
For a set of indices $S\subset [p(q+1) - 1]$,
denote by $[\mU, \mW]_S$ the submatrix of $[\mU, \mW]$ with columns indexed by $S$.
Under Assumptions~\ref{assump:bounded_u_element}--\ref{assump:sparsity_scale},
there exist constants $M_{uw}$ and $C_0$ such that with probability at least $1 - C_0\exp(-\log(pq))$ we have
\[\frac{1}{n}\norm*{[\mU , \mW]_S}_{\mathrm{op}}^2 \leq M_{uw}\]
for all $S$ satisfying $\abs{S} \leq \hat s^{\mathrm{max}}_\gamma + \hat s^{\mathrm{max}}_\beta$, provided that $(\hat s_{\gamma}^\mathrm{max} + \hat s_\beta^\mathrm{max})\log(pq) = O(\sqrt{n})$, as assumed in Theorem~\ref{thm:l2error}.
\end{lemma}
\begin{proof}
Letting $k = \abs{S}$, it suffices to show that
$\sup_{\norm{\vec v}_0 \leq k,\, \norm{\vec v}_2\leq 1} \vec v^\T\hSUW\vec v$
is bounded with the provided probability. We may write
\begin{align*}
\sup_{\norm{\vec v}_0 \leq k,\, \norm{\vec v}_2\leq 1} \vec v^\T\hSUW\vec v
&= \sup_{\norm{\vec v}_0 \leq k,\, \norm{\vec v}_2\leq 1}
\sets*{\vec v^\T\parens*{\hSUW - \SUW}\vec v + \vec v^\T\SUW\vec v}.
\end{align*}
The second term is bounded by Assumption~\ref{assump:spectralboundSUW}.
For the first term, by Lemma~\ref{lem:kuchi} we have
\[
\sup_{\norm{\vec v}_0 \leq k,\, \norm{\vec v}_2\leq 1}
\sets*{\vec v^\T\parens*{\hSUW - \SUW}\vec v}
\precsim k\parens*{\frac{\log(pq)}{n}}^{1/2} + \frac{k\log(pq)^2}{n/(\log n)^2}
\]
with probability at least $1 - C_0\exp(-\log(pq))$
and we see that the right-hand side is $o(1)$ by Assumption~\ref{assump:sparsity_scale}.
\end{proof}

\subsection{Proof of Theorem~\ref{thm:l2error}}
For ease of notation, we will drop the dependence of $\vec\gamma_j$, $\vec\beta_j$, $\vec\ep_j$ and $\mat W_{-j}$ on $j$.
Let $S_\beta, S_\gamma, \hat S_\beta, \hat S_\gamma$ be the support sets of $\vec\beta, \vec\gamma, \hvec\beta, \hvec\gamma$, respectively.
Let $S_{\beta,g} = \sets{h\in[q]: \vec\beta_h \neq \vec 0}$ index the blocks $\vec\beta_h$ of $\vec \beta$ that are not identically zero and let $\hat S_{\beta, g}$ be the corresponding block indices for $\hvec\beta$.
For any vector $\vec v$ and set of block indices $S$, let $\vec v_{(S)}$ denote the sub-vector containing blocks in $S$.
Let $s_\beta, s_\gamma, s_{\beta, g}, \hat s_\beta, \hat s_\gamma, \hat s_{\beta, g}$ be the number of elements in
$S_\beta, S_\gamma, S_{\beta, g}, \hat S_\beta, \hat S_\gamma, \hat S_{\beta, g}$, respectively.
Our proof occurs in three steps.

\textit{Step 1:}
In this step we bound the error $\norm{\mU\vec\nu + \mW\vec\Delta}_2^2/n$ by the stochastic term $\<\vec\ep, \mU\vec\nu + \mW\vec\Delta\>/n$, which is then bounded by a projection of $\vec\ep$ onto the columns of $\mUW$.

Since our penalty function
\[
g(\vec\gamma, \vec\beta) = \lambda\norm{\vec\gamma}_1 + \lambda\norm{\vec\beta}_1 + \lambda_g\norm{\vec\beta_{-0}}_{1,2}
\]
is convex, by Lemma~\ref{lem:1} we have
\[
\frac{1}{2n}\norm{\vec x - \mU\hvec\gamma - \mat W\hvec\beta}_2^2 + g(\hat{\vec\gamma}, \hat{\vec\beta})
+ \frac{1}{2n} \norm*{\mU\vec\nu + \mat W\vec\Delta}_2^2
\leq
\frac{1}{2n}\norm{\vec x - \mU\vec\gamma - \mat W\vec\beta}_2^2 + g({\vec\gamma}, {\vec\beta})
\]
where $\vec\nu = \hat{\vec\gamma} - \vec\gamma$ and $\vec\Delta = \hat{\vec\beta} - \vec\beta$.
Since $\vec\ep = \vec x - \mU\vec\gamma - \mat W\vec\beta$ we may write
\begin{align*}
\frac{1}{2n}\norm{\vec x - \mU\hvec\gamma - \mat W\hvec\beta}_2^2 &= \frac{1}{2n}\norm{\vec\ep - \mU\vec\nu - \mat W\vec\Delta}_2^2 \\
&= \frac{1}{2n}\norm{\vec\ep}_2^2 - \frac{1}{n}\<\vec\ep, \mU\vec\nu + \mat W\vec\Delta\> + \frac{1}{2n}\norm*{\mU\vec\nu + \mat W\vec\Delta}_2^2.
\end{align*}
Plugging this into the previous expression and substituting the penalty expression then yields
\begin{align*}
&\frac{1}{n} \norm{\mU\vec\nu + \mat W\vec\Delta}_2^2 + \lambda\norm{\hvec\gamma}_1 + \lambda\norm{\hvec\beta}_1 + \lambda_g\norm{\hvec\beta_{-0}}_{1,2} \\
&\qquad\leq \frac{1}{n} \<\vec\ep, \mU\vec\nu + \mat W\vec\Delta\> + \lambda\norm{\vec\gamma}_1 + \lambda\norm{\vec\beta}_1 + \lambda_g\norm{\vec\beta_{-0}}_{1,2}.
\end{align*}
Using the triangle inequality for the $\ell_1$ norm (and analogously for the group norm) yields
\begin{equation}
\label{eqn:s2}
\begin{split}
&\frac{1}{n} \norm{\mU\vec\nu + \mat W\vec\Delta}_2^2 + \lambda\norm{\vec\nu_{S_\gamma^c}}_1 + \lambda\norm{\vec\Delta_{S_\beta^c}}_1 + \lambda_g\norm{\vec\Delta_{(S_{\beta, g}^c)}}_{1,2} \\
&\qquad\leq \frac{1}{n} \<\vec\ep, \mU\vec\nu + \mat W\vec\Delta\> + \lambda\norm{\vec\nu_{S_\gamma}}_1 + \lambda\norm{\vec\Delta_{S_\beta}}_1 + \lambda_g\norm{\vec\Delta_{(S_{\beta, g})}}_{1,2}.
\end{split}
\end{equation}

Now let $\cI$ and $\cJ$ be arbitrary index sets of the columns of $\mat U$ and $\mat W$ respectively.
Denote by $\cP_{\cI, \cJ}$ the orthogonal projection onto the columns of $\mUW$ indexed by $(\cI, \cJ)$.
Let $\cI_0 = S_\gamma \cup \hat S_\gamma$ and $\cJ_0 = S_\beta \cup \hat S_\beta$ denote the unions of the true and estimated support sets of $\vec\gamma$ and $\vec\beta$.
We seek to bound the stochastic term
\begin{equation}
\label{eqn:s3}
\begin{split}
\<\vec\ep, \mU\vec\nu + \mat W\vec\Delta\> &= \<\cP_{\cI_0, \cJ_0}(\vec\ep), \mU\vec\nu + \mat W\vec\Delta\> \\
& \leq \norm{\cP_{\cI_0, \cJ_0}(\vec\ep)}_2 \norm{\mU\vec\nu + \mat W\vec\Delta}_2 \\
& \leq \frac{1}{2a_1}\norm{\mU\vec\nu + \mat W\vec\Delta}_2^2 + \frac{a_1}{2} \norm{\cP_{\cI_0, \cJ_0}(\vec\ep)}_2^2.
\end{split}
\end{equation}

\textit{Step 2:}
Following \citet{zhangli}, we first bound the term $\norm{\cP_{\cI_0, \cJ_0}(\vec\ep)}_2^2$ with a counting argument.
For fixed $s'_\gamma$, $s'_\beta$, and $s'_{\beta, g}$, let
\[\cH(s'_\gamma, s'_\beta, s'_{\beta, g}) = \sets{(\cI, \cJ) \subset [q] \times [(p-1)(q+1)]: \abs{\cI} = s'_\gamma,\, \abs{\cJ} = s'_\beta,\, \abs{g(\cJ)} = s'_{\beta, g}}\]
where $g(\cJ)$ is the number of nonzero groups of $\vec\beta_\cJ$.
We show that
\[
\log\abs{\cH} \leq s'_\gamma \log\frac{eq}{s'_\gamma} + s'_{\beta, g}\log\frac{eq}{s'_{\beta,g}} + s'_\beta \log(ep).
\]
Define $k_0$ to be the exponential of the right-hand side, so that $\abs{\cH} \leq k_0$.
For any $(\cI, \cJ)\in \cH$, Lemmas~\ref{lem:2} and \ref{lem:3} yield, after a union bound,
\begin{equation}
\label{eq:projbound0}
\PP\parens*{\sup_{(\cI, \cJ)\in\cH}\norm{\cP_{\cI, \cJ}(\vec\ep)}_2^2 \geq 5\sigma^2_\ep {\bracks*{s'_\gamma \log\frac{eq}{s'_\gamma} + s'_{\beta, g}\log\frac{eq}{s'_{\beta,g}} + s'_\beta \log(ep)}} + \sigma^2_\ep t} \leq e^{-t/3}.
\end{equation}
Bounding $\norm{\cP_{\cI_0, \cJ_0}(\vec\ep)}_2^2$ and $r = \sup r(s'_\gamma, s'_\beta, s'_{\beta,g})$ via the union bound and the KKT conditions gives the concentration bound
\begin{equation}
\label{eqn:rbound}
\PP\sets*{r \geq \tilde M \sigma^2_\ep(s_\gamma \log(eq/s_\gamma) + s_\beta\log(e p) + s_{\beta,g}\log(eq/s_{\beta,g}))}
\leq c_1\exp\sets*{-c_2 E_j/\sigma^2_\ep}.
\end{equation}
Using the KKT optimality conditions and Lemma~\ref{lem:operatorboundUW},
\begin{equation}
\label{eqn:projbound2}
\lambda^2 \hat{s}_\gamma + \lambda^2\hat s_\beta + \lambda_g^2 \hat s_{\beta, g}
\leq
\frac{2}{n}M_{uw} \norm*{\cP_{\cI_0, \cJ_0}(\vec\ep)}_2^2 + \frac{2}{n}M_{uw} \norm*{\mU{\vec\nu} + \mat W{\vec\Delta}}_2^2.
\end{equation}
Combining \eqref{eq:projbound0}--\eqref{eqn:projbound2} and plugging into \eqref{eqn:s2}--\eqref{eqn:s3} gives
\begin{equation}
\label{eqn:s11}
\begin{split}
&\frac{1}{n}\norm{\mU\vec\nu + \mat W\vec\Delta}_2^2 + \lambda\norm{\vec\nu_{S^c_\gamma}}_1 + \lambda\norm{\vec\Delta_{S^c_\beta}}_1 + \lambda_g\norm{\vec\Delta_{S^c_{\beta,g}}}_{1,2} \\
&\quad\leq \frac{5a_1a_2}{2(a_2 -2)} E_j + \frac{a_1a_2}{2(a_2-2)}\cdot\frac{r}{n} + \lambda\norm{\vec\nu_{S_\gamma}}_1 + \lambda\norm{\vec\Delta_{S_\beta}}_1 + \lambda_g\norm{\vec\Delta_{S_{\beta,g}}}_{1,2}
\end{split}
\end{equation}
where $E_j = (\sigma^2_\ep/n)\cdot \parens*{s_\gamma\log(eq/s_\gamma) + s_\beta\log(ep) + s_{\beta,g}\log(eq/s_{\beta, g})}$.

\textit{Step 3:}
We proceed to bound the penalty term by $\norm*{\SUW^{1/2} \begin{pmatrix}\vec\nu \\ \vec\Delta\end{pmatrix}}$.
Standard $\ell_1, \ell_2$ norm inequalities along with Assumption~\ref{assump:spectralboundSUW} lead to
\begin{align*}
\lambda\norm{\vec\nu_{S_\gamma}}_1 + \lambda\norm{\vec\Delta_{S_\beta}}_1 + \lambda_g\norm{\vec\Delta_{S_{\beta,g}}}_{1,2}
&\leq \frac{2C}{\sqrt{\phi_0^\ast}}\sqrt{E_j}\norm*{\SUW^{1/2} \begin{pmatrix}\vec\nu \\ \vec\Delta\end{pmatrix} }_2 \\
&\leq a_3\frac{C^2}{\phi_0^\ast}E_j + \frac{1}{a_3}\norm*{\SUW^{1/2} \begin{pmatrix}\vec\nu \\ \vec\Delta\end{pmatrix} }_2^2
\end{align*}
for any $a_3 > 0$.
We then bound the difference between $\norm*{\mU\vec\nu + \mat W\vec\Delta}_2^2/n$ and $\norm*{\SUW^{1/2} \begin{pmatrix}\vec\nu \\ \vec\Delta\end{pmatrix} }_2^2$ using Lemmas~\ref{lem:kuchi} and \ref{lem:loh}.
Choosing $a_1=2$, $a_2=6$, $a_3=6$ and applying Assumption~\ref{assump:spectralboundSUW}, we obtain
\begin{equation*}
\norm{\vec\nu}_2^2 + \norm{\vec\Delta}_2^2 \precsim
\frac{\sigma^2_\ep}{n} \parens*{s_\gamma\log(eq/s_\gamma) + s_\beta\log(ep) + s_{\beta,g}\log(eq/s_{\beta, g})}
\end{equation*}
with probability at least $1 - C_1\exp\sets*{-C_2(s_\gamma \log(eq/s_\gamma) + s_\beta\log(e p) + s_{\beta,g}\log(eq/s_{\beta,g}))}$
for some positive constants $C_1, C_2$. $\blacksquare$

\subsection{Proof of Theorem~\ref{thm:supportrecovery}}
The proof occurs in three steps.
\textit{Step 1:}
We show that with high probability
\begin{equation}
\label{eq:step1}
    \norm*{\hSUW \begin{pmatrix}\vec\nu \\ \vec\Delta\end{pmatrix}}_\infty \leq \frac{3}{2}(\lambda + \lambda_g).
\end{equation}
By the KKT conditions we know that an optimizer $(\hvec\gamma, \hvec\beta)$ satisfies
\[
\abs*{\parens*{[\mU, \mat W]^\T (\vec x_j - \mU\hvec\gamma - \mat W\hvec\beta)/n}_{\ell}} \leq \lambda + \lambda_g
\]
for all $\ell\in [p(q+1)-1]$.
By the triangle inequality, to show \eqref{eq:step1} it suffices to show that with high probability
\begin{equation}
\label{eqn:step1-suff}
\norm*{\frac{1}{n} [\mU, \mat W]^\T\vec\ep}_\infty \leq \frac{\lambda + \lambda_g}{2}.
\end{equation}
The products $U_\ell\ep$ and $X_\ell U_h\ep$ are sub-exponential; by Lemma~\ref{lem:4} and a union bound with $\log p \asymp \log q$,
\[
\PP\parens*{\frac{1}{n} \norm*{[\mU, \mat W]^\T \vec\ep}_\infty \geq \frac{\lambda + \lambda_g}{2}}
\leq 2\exp(-c_3\log p)
\]
for some $c_3 > 0$.

We also establish the cone condition
\begin{equation}
\label{eq:step1cone}
\norm{\vec\nu_{S_\gamma^c}}_1 + \norm{\vec\Delta_{S_\beta^c}}_1 \leq
 4\tau_j(\norm{\vec\nu_{S_\gamma}}_1 + \norm{\vec \Delta_{S_\beta}}_1)
\end{equation}
using the optimality of $(\hvec\gamma, \hvec\beta)$ and H\"{o}lder's inequality on the stochastic term, where $\tau_j = 1 + \sqrt{(s_\gamma + s_\beta)/s_{\beta,g}}$ as in Assumption~\ref{assump:coherence}.

\textit{Step 2:}
We bound the diagonal and off-diagonal entries of $\hSUW$.
By Lemma~\ref{lem:4} and Assumption~\ref{assump:spectralboundSUW},
\begin{equation}
\PP\parens*{\frac{\phi^\ast_0}{2} \leq \hSUW(\ell, \ell) \leq 2\phi^\ast_1} \geq 1- 2\exp(-c_6 n).
\end{equation}
By Assumption~\ref{assump:coherence},
\begin{equation}
\label{eq:s26}
\PP\parens*{ -\frac{1}{c_0(1+8\tau_j)(s_\beta+s_\gamma)}\leq \hSUW(k,\ell)\leq \frac{3}{c_0(1+8\tau_j)(s_\beta+s_\gamma)}}
\geq 1 - 2\exp(-c_7n).
\end{equation}

\textit{Step 3 and Final Step:}
Define $\tilde S = S_\gamma \cup \sets{q + i\mid i\in S_\beta}$.
Using the bounds from Steps~1--2, we establish the restricted eigenvalue condition
\begin{equation}
\label{eq:step3}
\frac{1}{n}\norm*{[\mU,\mat W] \begin{pmatrix} \vec\nu \\ \vec\Delta \end{pmatrix}}_2^2
\geq \parens*{\frac{\phi^\ast_0}{2} - \frac{1}{c_0}}\norm*{\begin{pmatrix} \vec\nu \\ \vec\Delta \end{pmatrix}_{\tilde S}}_2^2.
\end{equation}
Combining \eqref{eq:step1}, \eqref{eq:step1cone}, and \eqref{eq:step3} yields
\[
\norm*{\begin{pmatrix} \vec\nu \\ \vec\Delta \end{pmatrix}_{\tilde S}}_2
\leq
3(\lambda+\lambda_g) (1+4\tau_j) \parens*{\frac{c_0}{c_0\phi^\ast_0 - 2}} \sqrt{s_\gamma+s_\beta}.
\]
Finally, applying the $\ell_\infty$ bound from \eqref{eq:step1} and the matrix inequality
\[
\norm*{\begin{pmatrix} \vec\nu \\ \vec\Delta \end{pmatrix}}_\infty \leq \frac{2}{\phi^\ast_0} \norm*{\hSUW \begin{pmatrix} \vec\nu \\ \vec\Delta \end{pmatrix}}_\infty +
\frac{6}{c_0\phi^\ast_0(1+8\tau_j)(s_\gamma+s_\beta)}\norm*{\begin{pmatrix} \vec\nu \\ \vec\Delta \end{pmatrix}}_1,
\]
we obtain
\[
\norm*{\begin{pmatrix} \vec\nu \\ \vec\Delta \end{pmatrix}}_\infty
=
\frac{3}{\phi^\ast_0}(\lambda+\lambda_g)\parens*{1 + \frac{6(1+4\tau_j)^2}{(1+8\tau_j)(c_0\phi^\ast_0 - 2)}}
\]
as desired. $\blacksquare$

\section{Sufficient Conditions for Theoretical Results}
\label{sec:sufficientconditions}
While Assumptions~\ref{assump:bounded_u_element}--\ref{assump:sparsity_scale} are enough to prove Theorem~\ref{thm:l2error} and the addition of Assumption~\ref{assump:coherence} for Theorem~\ref{thm:supportrecovery}, it is desirable to present a set of interpretable sufficient conditions.
The key difference in the theoretical analysis of our proposed method versus that of \citet{zhangli} is that, due to the stage 1 centering in \eqref{eq:zhanglistage1}, the columns of the design matrix in \citet{zhangli} are centered. By contrast, the design matrix $[\mat U, \mat W]$ of our proposed method contains the terms $x^{(i)}_j u^{(i)}_h$ where it does not necessarily hold that $\EE(x^{(i)}_j \given \vec u^{(i)}) = 0$, but does obviate the need to bound a noisy design matrix.

One consequence of the above is that the matrix $\EE(\mat W_{\mathrm{ZL}}^\top \mat W_{\mathrm{ZL}}/n)$ is a covariance matrix whose spectrum is easily bounded via bounds on the eigenvalues of $\Cov(\bm U)$ and $\Cov(\bm X\given \bm U)$; these are Assumptions~1 and 2 in \citet{zhangli}.
Meanwhile, our analogous matrix $\SUW$ is an (uncentered) Gram matrix. Nevertheless, Theorems~\ref{thm:l2error} and \ref{thm:supportrecovery} follow from interpretable bounds on the eigenvalues of $\Cov(\bm U)$ and $\Cov(\bm X\given \bm U)$, plus additional assumptions that control the mean term.

\subsection{Sufficient Conditions for Theorem~\ref{thm:l2error}}
\begin{assumption}
\label{assump:covu}
The covariates $\sets{\vec u^{(i)}}_{i=1}^n$ are i.i.d. mean zero random vectors with covariance matrix satisfying
\[\phi_0 \leq \lammin(\Cov(\vec u^{(i)})) \leq \lammax(\Cov(\vec u^{(i)})) \leq \phi_1\]
for some constants $0 < \phi_0 \leq \phi_1 < \infty$.
\end{assumption}
\begin{assumption}
\label{assump:covx}
Recall that $\mat\Sigma(\vec u) = \Cov(\bm X\given \bm U = \vec u)$.
Suppose for constants $\psi_0, \psi_1 > 0$ we have
\[\psi_0 \leq \lammin(\mat\Sigma(\vec u^{(i)})) \leq \lammax(\mat\Sigma(\vec u^{(i)})) \leq \psi_1.\]
\end{assumption}
Assumptions~\ref{assump:covu} and \ref{assump:covx} are analogous to Assumptions~1 and 2 in \citet{zhangli} respectively, and are standard for results in the high-dimensional setting.

Next, as our design matrix is the concatenated $[\mat U, \mat W]$, we must restrict the cross-covariance between $\mat U$ and $\mat W$.
\begin{assumption}
\label{assump:cross_covar}
For any vectors $\vec a\in \R^{q}$, $\vec b \in \R^{(p-1)(q+1)}$ define the correlation of projections
\[
\rho^{(i)}(\vec a, \vec b) = \operatorname{Corr}\parens*{\vec a^\top \vec u^{(i)}, \vec b^\top \vec w^{(i)}}
\]
and let $\rho^{(i)}_{\min} = \min_{\norm{\vec a}_2 = 1, \norm{\vec b}_2 = 1} \rho^{(i)}(\vec a, \vec b)$ denote the minimal canonical correlation between $\vec u^{(i)}$ and $\vec w^{(i)}$.
We assume there exists a constant $\rho_0$ satisfying $0 \leq \rho_0 <1$ such that $\rho^{(i)}_{\min} > -\rho_0$ for all $i\in[n]$.
\end{assumption}

\begin{assumption}
\label{assump:bounded_u_mu_expect}
There exist constants $M_\mu, M_{u,1} > 0$ such that $\EE\norm{\vec\mu(\vec u)}_2^4 \leq M_\mu$ and $\EE \norm{\vec u}_2^4 \leq M_{u, 1}$.
\end{assumption}
\begin{assumption}
\label{assump:bounded_mean_infty}
There exists a constant $M_\infty > 0$ such that $\norm{\vec\mu(\vec u)}_\infty \leq M_\infty$.
\end{assumption}

Together, Assumptions~\ref{assump:covu} through \ref{assump:bounded_mean_infty} are sufficient to imply Assumptions~\ref{assump:spectralboundSUW} and \ref{assump:subgaussian_x}, as summarized in the following propositions.
\begin{proposition}
\label{prop:spectralboundUW_sufficient}
Assumptions~\ref{assump:bounded_u_element} and \ref{assump:covu}--\ref{assump:bounded_u_mu_expect} imply Assumption~\ref{assump:spectralboundSUW}.
\end{proposition}
\begin{proposition}
\label{prop:subgaussian_x_sufficient}
Assumptions~\ref{assump:bounded_u_element}, \ref{assump:covx}, and \ref{assump:bounded_mean_infty} imply Assumption~\ref{assump:subgaussian_x}.
\end{proposition}

\subsection{Sufficient Conditions for Theorem~\ref{thm:supportrecovery}}
With Assumption~\ref{assump:bounded_mean_infty} we have
\begin{align*}
  \max_{k\neq \ell} \abs{\SUW(k, \ell)}
  \leq \max_{\ell_1 \neq \ell_2}\abs*{\bracks*{\Cov(\vec u^{(1)})}_{\ell_1, \ell_2}} \times \abs*{\max_{\ell_3 \neq \ell_4}\sup_{\vec u} \bracks*{\Cov(\vec x^{(1)}\given \vec u)}_{\ell_3, \ell_4} + M_\infty^2}.
\end{align*}
Therefore, Assumption~\ref{assump:coherence} is satisfied when the covariances of $\bm X$ and $\bm U$ are small and the mean of $\bm X$ is small.

\subsection{Proof of Proposition~\ref{prop:spectralboundUW_sufficient}}
We wish to prove that Assumptions \ref{assump:bounded_u_element} and \ref{assump:covu}--\ref{assump:bounded_u_mu_expect} imply Assumption~\ref{assump:spectralboundSUW}.
It suffices to upper and lower bound $\vec v^\top \SUW \vec v$ for all unit vectors $\vec v$.
Express the block matrix form as
\[
\SUW =
\begin{pmatrix}
    \mat\Sigma_{\mat U \mat U} & \SUW^{(12)} \\
    \SUW^{(12)\top} & \mat\Sigma_{\mat W \mat W}
\end{pmatrix}
\]
and let $\vec w = (\vec x, u_1\vec x, \dotsc, u_q\vec x) \in \R^{(p-1)(q+1)}$ be a row of the data matrix $\mat W$.
Express an arbitrary unit vector $\vec v$ as $(\vec a, \vec b)$ where $\vec a \in \R^q$ and $\vec b \in \R^{(p-1)(q+1)}$.
Further express $\vec b$ as $(\vec b_0, \vec b_1, \dotsc, \vec b_q) $ where $\vec b_h \in \R^{p-1}$, $h = 0, \dotsc, q$.
Let $\mat B = [\vec b_1, \dotsc, \vec b_q] \in \R^{(p-1)\times q}$ and define $c(\vec u) = \vec b_0 + \mat B \vec u$.

\textit{Lower bound:}
Since $\vec u$ is centered, $\vec a^\top \mat\Sigma_{\mat U \mat U} \vec a \geq \phi_0 \norm{\vec a}_2^2$ by Assumption~\ref{assump:covu}.
By Assumptions~\ref{assump:covu} and \ref{assump:covx} and the law of total variance,
\[
\vec b^\T \mat\Sigma_{\mat W \mat W} \vec b \geq \psi_0\norm{\vec b_0}_2^2 + \psi_0\phi_0 \norm{\mat B}_F^2
\geq \phi_0'\norm{\vec b}_2^2
\]
where $\phi_0' = \min(\psi_0, \psi_0\phi_0)$.
By Assumption~\ref{assump:cross_covar},
\[
\vec a^\top \SUW^{(12)}\vec b \geq -\rho_0\parens*{\vec a^\top \mat\Sigma_{\mat U \mat U} \vec a}^{1/2}\parens*{\vec b^\top \mat\Sigma_{\mat W \mat W} \vec b}^{1/2}.
\]
Completing the square yields
\[
\vec v^\T \SUW \vec v \geq (1-\rho_0^2)\frac{\phi_0\phi_0'}{\phi_0 + \phi_0'} =: \phi_0^\ast.
\]

\textit{Upper bound:}
By Assumption~\ref{assump:covu}, $\vec a^\top \mat\Sigma_{\mat U \mat U} \vec a \leq \phi_1 \norm{\vec a}_2^2$.
By Assumptions~\ref{assump:covu}, \ref{assump:covx}, and \ref{assump:bounded_u_mu_expect},
\[
\vec b^\T \mat\Sigma_{\mat W \mat W} \vec b \leq (\phi_1' + \phi_2')\norm{\vec b}_2^2
\]
where $\phi_1' = \max(\psi_1, \psi_1\phi_1)$ and $\phi_2' = \sqrt{8}M_\mu^{1/2}(1+M_{u,1}^{1/2})$.
By the AM-GM inequality and all in all,
\[
\vec v^\T \SUW \vec v \leq 2\max(\phi_1, \phi_1' + \phi_2') =: \phi^\ast_1. \quad\blacksquare
\]

\subsection{Proof of Proposition~\ref{prop:subgaussian_x_sufficient}}
We show that Assumptions~\ref{assump:bounded_u_element}, \ref{assump:covx}, and \ref{assump:bounded_mean_infty} imply Assumption~\ref{assump:subgaussian_x}.
Write $X_i - \EE X_i = (X_i - \mu_i(\vec u)) + (\mu_i(\vec u) - \EE[\mu_i(\vec u)])$.
Conditionally on $\vec u$, $X_i - \mu_i(\vec u) \sim \mathcal N(0, [\mat\Sigma(\vec u)]_{ii})$, which is sub-Gaussian with norm bounded by $K_1\sqrt{\psi_1}$ by Assumption~\ref{assump:covx}.
By Assumption~\ref{assump:bounded_mean_infty}, $\abs{\mu_i(\vec u) - \EE[\mu_i(\vec u)]} \leq 2 M_\infty$, so $\mu_i(\vec u) - \EE[\mu_i(\vec u)]$ is also sub-Gaussian with bounded norm. $\blacksquare$

\bibliography{biblio}

\end{document}